\documentclass[11pt]{article}

\usepackage{tikz}
\usepackage{verbatim}
\usepackage{cancel}
\usepackage{enumitem}
\usepackage{bm}
\usepackage{caption}
\usepackage{changepage}
\usepackage{mathtools}
\usepackage{bbm}
\usepackage{times}

\usepackage[dvipsnames]{xcolor}

\usepackage{sn-preamble}
\usepackage{bbm}
\usepackage{natbib}

\newcommand{\red}[1]{{{#1}}}

\newcommand{\ignore}[1]{}

\newcommand{\train}{\mathbf{S}_{\textsf{train}}}
\newcommand{\ltrain}{\overline{\mathbf{S}}_{\textsf{train}}}
\newcommand{\ldtrain}{\overline{\mathcal{D}}_{\textsf{train}}}
\newcommand{\test}{\mathbf{X}_{\textsf{test}}}
\newcommand{\acc}{\textsf{ACCEPT}}
\newcommand{\rej}{\textsf{REJECT}}

\newcommand{\dtrain}{\calD^{\textsf{train}}}
\newcommand{\dtest}{\calD^{\textsf{test}}}

\newcommand{\tdsboost}{\textsc{TDSboost}}

\newcommand{\frozen}{\mathsf{frozen}}

\newcommand{\pmin}{p_{\mathrm{min}}}
\newcommand{\lbl}{\mathsf{label}}
\newcommand{\weakdist}{\mathsf{WD}}
\newcommand{\getweakdist}{\textsc{GetWeakDistinguisher}}
\newcommand{\amin}{a_{\mathrm{min}}}
\renewcommand{\next}{\mathsf{next}}
\newcommand{\Dtrain}{\calD_1}
\newcommand{\Dtest}{\calD_0}
\newcommand{\accuracy}{\eps}
\newcommand{\rejrate}{\eta}
\newcommand{\leaves}{\mathsf{leaves}}

\newcommand{\dtrainlabeled}{\bar{\calD}^{\textsf{train}}}
\newcommand{\dtestlabeled}{\bar{\calD}^{\textsf{test}}}

\DeclareMathOperator*{\argmin}{arg\,min}

\renewcommand{\bw}{w}
\newcommand{\sgn}{\mathrm{sign}}

\newcommand{\learnh}{\textsc{Learn-Halfspace}}
\newcommand{\forster}{\textsc{Forster}}
\newcommand{\learnmargin}{\textsc{Learn-High-Margin-Halfspace}}

\definecolor{comments}{RGB}{80,120,120}

\usepackage{enumitem}

\title{Equivalence of Coarse and Fine-Grained Models \\ for Learning with Distribution Shift}

\author{%
    \begin{tabular}{cc}
        \begin{tabular}{c}
            Adam R. Klivans\thanks{Supported by the NSF AI Institute for Foundations of Machine Learning (IFML).}\\ \texttt{klivans@cs.utexas.edu} \\ UT Austin 
        \end{tabular} & 
        \begin{tabular}{c}
            Shyamal Patel\thanks{Supported by NSF awards CCF-2106429, CCF-2107187, CCF-2218677, ONR award
ONR-13533312, and an NSF Graduate Student Fellowship.} \\ \texttt{shyamalpatelb@gmail.com} \\ Columbia University
        \end{tabular}
        \\\\
        \begin{tabular}{c}
             Konstantinos Stavropoulos\thanks{Supported by the NSF AI Institute for Foundations of Machine Learning (IFML), and by the 2025 Apple Scholars in AI/ML PhD fellowship.} \\ \texttt{kstavrop@cs.utexas.edu} \\ UT Austin
        \end{tabular}
         & 
         \begin{tabular}{c}
              Arsen Vasilyan\thanks{Supported by the NSF AI Institute for Foundations of Machine Learning (IFML).} \\ \texttt{arsenvasilyan@gmail.com} \\ UT Austin
         \end{tabular}
    \end{tabular}
}

\date{}

\begin{document}

\maketitle

\begin{abstract}
Recent work on provably efficient algorithms for learning with distribution shift has focused on two models: PQ learning \cite{goldwasser2020beyond} and TDS learning \cite{klivans2024testable}.  Algorithms for TDS learning are allowed to reject a test set entirely if distribution shift is detected.  In contrast, PQ learners may only reject points that are deemed out-of-distribution on an individual basis.  Our main result is a surprising equivalence between these two models in the distribution-free setting.  In particular, we give an efficient black-box reduction from PQ learning to TDS learning for any Boolean concept class.  This equivalence implies the first hardness results for distribution-free TDS learning of basic classes such as halfspaces.  The main technical contribution underlying our equivalence is a method for boosting, via branching programs,  the weak distinguishing power of TDS learners that have rejected the target domain. 

We also show that giving a learner access to {\em membership queries} sidesteps these hardness results and allows for efficient, distribution-free PQ learnability of halfspaces.  Our algorithm iteratively recovers large-margin separators obtained by applying successive Forster transforms on the training data.

\end{abstract}

\newpage

\section{Introduction}

One of the most fundamental models for learning with distribution shift is \emph{domain adaptation}, where a learner is given access to a labeled training set drawn from a source distribution ${\cal D}$, an unlabeled test set drawn from a potentially different target distribution ${\cal D'}$, and the goal is to output a classifier that performs well with respect to the test set.   The relationship between ${\cal D}$ and ${\cal D'}$ may be arbitrary, and prior work established conditions under which a classifier trained on some source distribution will generalize to a different target distribution
\citep{ben2006analysis,blitzer2007learning,mansour2009domadapt,ben2010theory,david2010impossibility}.    In particular, these works prove generalization bounds assuming that some notion of distance between the source and target domains is bounded (e.g., discrepancy distance).  These distances, however, are not efficiently verifiable and do not give algorithmic guarantees.  

Motivated by these concerns, Goldwasser et al.~\cite{goldwasser2020beyond} defined the model of PQ learning, where a learner produces a classifier that may abstain on points from the test domain that it believes are out-of-distribution relative to the source domain.  For points that are in-distribution the classifier is guaranteed to achieve low error.  They give an algorithm that succeeds even under potentially extreme distribution shift but requires access to an oracle for empirical risk minimization (ERM).  It is well known, however, that performing ERM is NP-hard for very simple function classes.  In fact, their reduction requires an ERM oracle even when the training distribution is nicely structured (e.g., a Gaussian). 

In search of truly efficient algorithms, Klivans et al.~\cite{klivans2024testable} introduced the model of Testable Distribution Shift learning (TDS learning), where an algorithm may abstain at the \emph{population level}, either outputting a classifier with low error on the test set or rejecting the test set entirely. The rejection criterion of TDS learning is therefore much coarser than in PQ learning, where rejection is performed individually on test points. \cite{klivans2024testable} gave the first efficient algorithms for learning well-studied concept classes in this model (such as halfspaces), whenever the training distribution is nicely structured (e.g., log-concave or uniform on the hypercube).  Their algorithms make no assumptions on the test distribution and do not require oracles for ERM.  Extensions and improvements on these initial TDS results have appeared in multiple recent works \citep{klivans2024testable,klivans2024learning,chandrasekaran2024efficient,goel2024tolerant,chandrasekaran2025learning}. 

\subsection{Our Main Result}
  In this paper, we prove, surprisingly, that distribution-free TDS learning and PQ learning are equivalent.  Our main result is a computationally efficient black-box reduction from PQ learning to TDS learning, which shows that a learning algorithm that rejects distributions at the population level can be converted into one that makes per-example decisions about whether test points are in or out of distribution.  At a high level, our reduction uses the method of boosting by branching programs, viewing TDS learners that reject as weak distinguishers of samples drawn from source and target domains. 

This equivalence has strong implications: it immediately gives the first set of hardness results for distribution-free TDS learning of halfspaces (even halfspaces with a margin) and other basic concept classes, as these are known to be computationally intractable in the PQ setting \citep{kalai2021efficient}.  It further shows that assumptions on the training distribution are {\em necessary} for efficient TDS learnability, even for simple concept classes.

\subsection{PQ Learning with Queries}
Given these negative results, a natural question is whether there is some other way to circumvent hardness.  As mentioned above, prior work on TDS learning has shown that assumptions on the training distribution such as log-concavity yield positive results for TDS learning and in some cases PQ learning \citep{klivans2024testable,klivans2024learning,chandrasekaran2024efficient,goel2024tolerant,chandrasekaran2025learning}.

Here we explore a different path: can we design efficient, distribution-free PQ learning algorithms by giving the learner \emph{membership query access}.  That is, the learner can query the labels of arbitrary points in the support of the training distribution before seeing the test set.  We make no assumptions on either the training or the test distributions.  This setup provides more power to the learner but still preserves the core difficulty of distribution shift. 

Recent work by \cite{lange2025limitations} shows that membership queries do {\em not} help in the related model of testable learning, suggesting that query access may not be helpful in our setting.  
Perhaps unexpectedly, we show that queries do help in this setting and present an efficient, distribution-free PQ learner for halfspaces.  Our algorithm works in stages, first using Forster transform to convert the halfspace into one with large enough margin, and then using queries to recover the parameters of the transformed halfspace.
These results demonstrate a clear separation between testable learning and TDS learning. While membership queries fail to help in testable learning, they can circumvent hardness in the presence of distribution shift.

\subsection{Technical Overview}

\paragraph{Reducing PQ Learning to TDS Learning.}
Before describing our reduction, we briefly review the TDS and PQ learning models (see \Cref{definition:tds-learning} and \Cref{definition:pq-learning}). In both models, the learner has access to labeled samples from a training distribution $\dtrain$ and unlabeled samples from a test distribution $\dtest$.

\begin{itemize}
    \item A TDS learner either accepts or rejects. Upon acceptance, it must output a hypothesis whose error on $\dtest$ is at most $\eps$. In addition, the learner is required to accept with high probability in the case where $\dtrain = \dtest$.
    
    \item A PQ learner outputs a hypothesis $h$ together with a selector $g$, which determines whether a given point should be classified. The selector must reject at most an $\eps$ fraction of points drawn from $\dtrain$. Furthermore, for a random $\bx \sim \dtest$, the probability that $g(\bx)$ accepts while $h(\bx)$ misclassifies $\bx$ must be at most $\eps$.
\end{itemize}

Suppose we are given a TDS learning algorithm $\calA$. If we run $\calA$ on samples from $\dtrain$ and $\dtest$ and it accepts, then with high probability the hypothesis $h$ it outputs has error at most $\eps$ on $\dtest$. In this case, we immediately obtain the guarantees of PQ learning by taking the selector $g \equiv 1$ (i.e., rejecting no test points) together with the hypothesis $h$.

If, on the other hand, $\calA$ rejects, then it distinguishes between $\dtrain$ and $\dtest$ using $m$ samples. In this regime, we may use a hybrid argument to obtain a weak distinguisher $\weakdist$ that, given a single sample, can distinguish whether it was drawn from $\dtrain$ or $\dtest$ with advantage $\Omega\!\left(\frac{1}{m}\right)$. Our main objective is then to boost this weak distinguishing advantage, producing an algorithm that still operates on a single sample but distinguishes between $\dtrain$ and $\dtest$ with high probability.

Before describing how we do this, we start by explaining why a black box boosting algorithm will not be sufficient for our purposes. In particular, consider using an algorithm such as AdaBoost to accomplish our goal. Running the AdaBoost algorithm with our weak distinguishing algorithm $\weakdist$ results in new distributions $(\dtrain)'$ and $(\dtest)'$. We can then run $\calA$ on samples from $(\dtrain)'$ and $(\dtest)'$. We have the following cases:
\begin{itemize}
    \item If $\calA$ rejects, we get a weak distinguisher between these two distributions and can continue with the boosting procedure.
    \item However, if $\calA$ accepts, we get a hypothesis with good accuracy under $(\dtest)'$ rather than a distinguisher. Unfortunately, $h$ could have large error under $\dtest$ as $(\dtest)'$ could be significantly different. Moreover, since we don't have a weak distinguisher between $(\dtrain)'$ and $(\dtest)'$, it is unclear how to continue to make progress towards distinguishing $\dtrain$ and $\dtest$.
\end{itemize}

To circumvent this issue, we will show that a modification of the martingale boosting algorithm of \cite{long2005martingale} will allows us to prove our result. In particular, using this scheme, we will partition the domain into disjoint regions $\calR_1, \dots, \calR_k$ such that for each region $\calR_i$ either $(i)$ $\calA$ accepts on $\dtrain|_{\calR_i}$ and $\dtest|_{\calR_i}$ where $\cdot|_{\calR_i}$ denotes the distribution conditioned on $x \in \calR_i$, $(ii)$ points in $\calR_i$ are primarily from $\dtest$, i.e. $\Pr_{\bx \sim \dtrain}[\bx \in \calR_i] \leq \eps \Pr_{\dtest}[\bx \in \calR_i]$, or $(iii)$ points in $\calR_i$ are primarily from $\dtrain$, meaning $\Pr_{\dtest}[\bx \in \calR_i] \leq \eps \Pr_{\bx \sim \dtrain}[\bx \in \calR_i]$.

Notably, given such a decomposition, we can set our selector function to reject on regions satisfying $(ii)$, i.e. those containing many more points from $\dtest$ than $\dtrain$. In particular,
\[g(x) = \begin{cases}
    0 & \exists \calR_i: x \in \calR_i \land \Pr_{\bz \sim \dtrain}[\bz \in \calR_i] \leq \eps \Pr_{\bz \sim \dtest}[\bz \in \calR_i] \\
    1 & \text{otherwise}
\end{cases}\]

Notably, such a selector on rejects a small fraction of $\dtrain$ as
    \[\Pr_{\bx \sim \dtrain}[g(\bx) = 0] = \!\! \sum_{\substack{i \in [k]: \\ \calR_i \text{ satisfies } (ii)}} \!\!\!\!\! \Pr_{\bx \sim \dtrain} [\bx \in \calR_i] \leq \!\! \sum_{\substack{i \in [k]: \\ \calR_i \text{ satisfies } (ii)}} \!\!\!\!\! \eps \Pr_{\bx \sim \dtest} [\bx \in \calR_i] \leq \eps \]

With such a selector, we can then set our hypothesis $h(x)$ as follows: If $x$ is contained in a region $\calR_i$ satisfying $(i)$ set $h(x) := h_i(x)$, where $h_i$ is the hypothesis output by $\calA$ on $\dtrain|_{\calR_i}$ and $\dtest|_{\calR_i}$. On the other hand, for all $x$ in regions $\calR_i$ satisfying $(iii)$ we set $h(x) = 1$.

By an analogous computation as done for the selector, we have that points from $\dtest$ lie in a region $\calR_i$ satisfying $(iii)$ with probability at most $\eps$. On the other hand, each $h_i$ has error at most $\eps$ under $\dtest|_{\calR_i}$. Since the selector rejects any point lying in a region satisfying $(ii)$, it then follows that the probability that $g(x) = 1$ and $h$ makes an error is then at most $2 \eps$, yielding the desired PQ learner.

\paragraph{PQ Learning Halfspaces with Queries.} 
To prove our result on PQ learning halfspaces, we start by noting that it essentially suffices to prove the statement for homogenous halfspaces, i.e. those of the form $\sign(w \cdot x)$ with $w \in \R^n$. In this homogenous setting, we'll assume, without loss of generality, that $\dtrain, \dtest \subseteq \mathbb{S}^{n-1}$ and that the vector $w \in \mathbb{R}^n$ defining the halfspace satisfies $\|w\| = 1$.

The key idea is then to simply notice that if $w \cdot x$ has many points with margin at least $\gamma$, i.e. satisfying $|w \cdot x| \geq \gamma$, then any halfspace $\sgn(\wh{w} \cdot x)$, where $\wh{w}$ has Euclidean distance at most $\gamma/3$ from $w$, agrees with $w$ on these high margin points. In particular, any point $x \in \mathbb{S}^{n-1}$ with $|\wh{w} \cdot x| \geq 2\gamma/3$ has $\sgn(\wh{w} \cdot x) = \sgn(w \cdot x)$. Moreover, any point $x \in \mathbb{S}^{n-1}$ with $|w \cdot x| \geq \gamma$ must satisfy $|\wh{w} \cdot x| \geq 2\gamma/3$. Thus, if we could compute a high accuracy estimate $\wh{w}$ of $w$, we could use the selector function $g(x) = \mathbb{I} \{|\wh{w} \cdot x| \geq 2\gamma/3\}$ and hypothesis $h(x) = \sgn(\wh{w} \cdot x)$ to correctly label all high margin points. On the other hand, computing such an estimate $\wh{w}$ can be done easily with queries by noting that the center of mass under $\calN(0,I_n)$ of the points $\{x \in \R^n: w \cdot x \geq 0\}$ points in the direction of $w$.

Of course, since the distribution $\dtrain$ is arbitrary, it's possible that no points have large margin. As such, our algorithm first applies a Forster transform to transform the point set into one where there is a non-trivial fraction of points with a margin. We can then run the above high margin PQ learning algorithm to correctly label all points with a high margin. Afterwards, we can recurse on the unlabeled points until only an $\eps$ fraction of points from $\dtrain$ are unlabeled.

Given a point $\bx \sim \dtest$, we apply the transformations and classifiers computed for $\dtrain$ to $\bx$. If any of the high margin PQ learners that we trained accepts, we accept and label the point according to the learner. Otherwise, we reject. Notably, since our high margin PQ learner correctly labels all points accepted by its selector, our hypothesis will have low error over $\dtest$ whenever the selector accepts. On the other hand, since our transformations and classifiers label most of our training set, we expect only an $\eps$ fraction of points from $\dtrain$ to be rejected by our selector, as desired.

\subsection{Related Work} 

\paragraph{Learning via Refutation}
Recall that the first step of our proof is to obtain a weak distinguisher between the training and test distributions using a hybrid argument. Similar hybrid arguments have appeared in prior work on learning via refutation \citep{vadhan2017learning,kothari2018improper}. In particular, \cite{kothari2018improper} use such an argument to derive a weak PAC learner from a distinguishing algorithm that can tell whether (i) there exists a concept in a given class achieving nontrivial advantage on a labeled distribution, or (ii) the labels are completely random. This weak learner is then converted into a strong agnostic learner via a black-box boosting procedure.

In contrast, standard black-box boosting does not apply in our setting, since the TDS learner may accept and output a hypothesis rather than explicitly distinguish the two distributions. To address this, we build on the boosting framework of \cite{long2005martingale} and develop a new analysis that accommodates weak distinguishers that are allowed to abstain.

\paragraph{Domain Adaptation}
Domain adaptation has been extensively studied over the past two decades \citep{ben2006analysis,blitzer2007learning,mansour2009domadapt,ben2010theory,david2010impossibility,redko2020survey}. As in our setting, this line of work considers learners that observe labeled samples from a source distribution and unlabeled samples from a target distribution, and aim to achieve low error on the target distribution—without the option to reject.

A classical approach \citep{ben2006analysis,blitzer2007learning,mansour2009domadapt} bounds the target error of an empirical risk minimizer trained on source data by the sum of a joint optimal error term and a measure of divergence between the source and target marginals, such as the discrepancy or $d_A$ distance. These divergences can be estimated statistically from unlabeled data, yielding statistically efficient guarantees. However, computing discrepancy-based distances is computationally intractable in general, requiring exponential time even for simple concept classes such as halfspaces \citep{chandrasekaran2024efficient}.

\paragraph{PQ Learning}
PQ learning was introduced by \cite{goldwasser2020beyond}, who showed that access to an agnostic ERM oracle suffices to solve the problem. \cite{kalai2021efficient} subsequently established that PQ learning is computationally equivalent to \emph{reliable agnostic learning} \cite{kalai2012reliable}. Reliable agnostic learning is strictly easier than standard agnostic learning; for example, parities over the Boolean hypercube admit efficient reliable agnostic learners. In contrast, reliably learning conjunctions—and hence PQ learning for this class—is at least as hard as PAC learning DNFs, a long-standing open problem that serves as a canonical barrier in computational learning theory.

\cite{goel2024tolerant} later studied a distribution-specific variant of PQ learning and obtained the first efficient algorithms for several concept classes under Gaussian training distributions.

\paragraph{Testable Learning}
Testable learning with distribution shift (TDS learning) was introduced by \cite{klivans2024testable} and is closely related to the model of testable agnostic learning of \cite{rubinfeld2022testing}. Both frameworks require the learner to certify its own performance. In testable agnostic learning, there is no distribution shift, and the learner must either output a hypothesis with near-optimal error or reject upon detecting a violation of a target distributional assumption. TDS learning generalizes this framework to settings where the training and test distributions may differ.

A substantial line of work has developed computationally efficient algorithms for testable agnostic learning \citep{rubinfeld2022testing,gollakota2022moment,gollakota2023efficient,diakonikolas2024testable,gollakota2024tester,diakonikolas2023efficient,slot2024testably,klivans2025the} and for TDS learning \citep{klivans2024testable,klivans2024learning,chandrasekaran2024efficient,goel2024tolerant,chandrasekaran2025learning}, primarily in distribution-specific settings such as the Gaussian distribution or the uniform distribution over the Boolean hypercube.

In the distribution-free setting, testable agnostic learning coincides with standard agnostic learning. Prior to our work, the strongest known relationship was that TDS learning is no harder than PQ learning, and hence no harder than reliable agnostic learning. In particular, it was unknown whether TDS learning of halfspaces admits an efficient algorithm, even in the presence of margin assumptions. Our results resolve this question by showing that TDS learning is computationally equivalent to reliable agnostic learning: strictly easier than agnostic learning, yet harder than realizable PAC learning. As a consequence, TDS learning of halfspaces (even with margin) inherits hardness barriers such as PAC learning DNFs.

Finally, \cite{lange2025limitations} showed that membership queries do not provide additional power for testable agnostic learning. In contrast, we show that membership queries can be leveraged in the TDS setting by giving an efficient TDS learner for halfspaces that uses membership queries prior to observing any test examples.

\section{Preliminaries}

\paragraph{Notation} We let $\calX$ be some domain. A concept class $\calC$ over $\calX$ is a set of functions $c:\calX\to \{\pm 1\}$. For a distribution $\calD$ over $\calX$, we denote with $\bar{\calD}$ the corresponding labeled distribution, where the marginal distribution on the labels as well as their coupling with $\calD$ will be clear by the context. In particular, when we say that $\calD$ is labeled by some $f\in\calC$, then the distribution $\bar\calD$ is the distribution over pairs $(\bx,f(\bx))$, where $\bx\sim \calD$. Finally, we will use boldface to denote random variables. 

\subsection{TDS and PQ Learning}

\begin{definition}[TDS Learning \citep{klivans2024testable}]
\label[definition]{definition:tds-learning}
    Let $\calC$ be a concept class over $\calX$. The algorithm $\calA$ is a TDS learner for $\calC$ up to error $\accuracy$ probability of failure $\delta$ if, provided labeled sample access to a training distribution $\dtrainlabeled$ labeled by some unknown $f\in\calC$, and unlabeled sample access to a test distribution $\dtest$, algorithm $\calA$ either outputs $\rej$ or outputs $\acc$ and returns a hypothesis $h:\calX\to \{\pm 1\}$ such that, with probability at least $1-\delta$, the following hold:
    \begin{enumerate}[label=\textnormal{(}\alph*\textnormal{)},leftmargin=6ex]
    \item\label{item:soundness-tds} (\textit{soundness}) Upon acceptance, the test error is bounded as $\Pr_{\bx\sim \dtest}[f(\bx) \neq h(\bx)] \le \accuracy$.
    \item\label{item:completeness-tds} (\textit{completeness}) 
    If $\dtrain =\dtest$, then the algorithm accepts.
    \end{enumerate}
\end{definition}

\begin{definition}[PQ Learning \citep{goldwasser2020beyond}]
\label[definition]{definition:pq-learning}
    Let $\calC$ be a concept class over $\calX$. The algorithm $\calA$ is a PQ learner for $\calC$ up to error $\accuracy$
    and probability of failure $\delta$ if, provided labeled sample access to a training distribution $\dtrainlabeled$ labeled by some unknown $f\in\calC$, and unlabeled sample access to a test distribution $\dtest$, algorithm $\calA$ outputs, with probability at least $1-\delta$, a hypothesis $h:\calX\to \{\pm 1\}$ and a selector $g:\calX\to \{0,1\}$ such that:
    \begin{enumerate}[label=\textnormal{(}\alph*\textnormal{)},leftmargin=6ex]
    \item\label{item:accuracy-pq} (\textit{accuracy}) The test error is bounded as $\Pr_{\bx\sim \dtest}[f(\bx) \neq h(\bx)\text{ and }g(\bx) = 1] \le \accuracy$.
    \item\label{item:rejrate-pq} (\textit{rejection rate}) The probability of rejection is bounded as $\Pr_{\bx\sim \dtrain}[g(\bx) = 0] \le \red{\eps}$.
    \end{enumerate}
\end{definition}

\section{A Reduction from PQ Learning to TDS Learning}

\subsection{Getting a Weak Distinguisher}
\label{sec:weak-distinguisher}
Let $\calA$ be the TDS algorithm and suppose that it draws at most $m$ samples from $\dtrain$ and $\dtest$. We begin by noting that if $\calA$ is likely to reject a set of samples from $\dtest$ then it is clearly able to distinguish between $\dtrain$ and $\dtest$ using $m$ samples. To formalize these statements, we make the following definitions.

\begin{definition}[``Rejecting'' TDS Algorithm]\label[definition]{definition:rejecting-algo}
    Given distributions $\dtrain$ and $\dtest$, we say that a TDS learning algorithm $\calA$ is \emph{rejecting} if $\calA$ outputs $\rej$ with probability at least $0.02$. 
\end{definition}

\begin{definition}[Distinguishing Advantage]
	We say that an algorithm $\calA$ distinguishes $\calD$ and $\calD'$ with $m$ samples and advantage $\gamma$ if 
		\[\left| \Pr_{\ba_1, \dots, \ba_m \sim \calD} [ \calA(\ba_1, \ba_2, \dots, \ba_m) = 1] - \Pr_{\bb_1, \dots, \bb_m \sim \calD'} [ \calA(\bb_1, \bb_2, \dots, \bb_m) = 1]  \right| \geq \gamma \]
\end{definition}

Notably, a rejecting TDS algorithm, $\calA$, has $\Omega(1)$ distinguishing advantage using $m$ samples. Our main goal in this section will be to show, by a hybrid argument, that this implies the existence of an algorithm that distinguishes between $\dtrain$ and $\dtest$ using only a single sample and with distinguishing advantage $\Omega(1/m)$. 

Indeed, we show this precisely with the following lemma. 

\begin{lemma}\label[lemma]{lemma:single-sample-advantage}
    Let $\dtrain$ and $\dtest$ be distributions, $\calA$ be an $(\eps,0.01)$-TDS learner for a class $\calC$ that is rejecting algorithm and uses at most $m$ samples from $\dtest$, and $\delta \in (0,1/3)$, then with probability $1-\delta$, $\getweakdist(\dtrain, \dtest, \calA)$ (cf. \Cref{alg:weak-distinguisher}) outputs an algorithm with distinguishing advantage $\Omega(1/m)$ that only uses a single sample. 
\end{lemma}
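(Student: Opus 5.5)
We argue by a hybrid argument over which of the test samples come from $\dtrain$ versus $\dtest$, and then locate a good hybrid index empirically. Since $\calA$ is rejecting, $\Pr[\calA \text{ rejects on } (\dtrain,\dtest)]\ge 0.02$. By completeness of an $(\eps,0.01)$-TDS learner, when the test samples are drawn from $\dtrain$ itself, $\calA$ rejects with probability at most $0.01$. Fix the $m$ training samples as freshly drawn from $\dtrain$ (these are available to the weak distinguisher, which has sample access to $\dtrain$). For $i=0,\dots,m$ let $H_i$ denote the experiment where the first $i$ unlabeled test samples are drawn from $\dtest$ and the remaining $m-i$ from $\dtrain$. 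Let $p_i$ be the probability that $\calA$ rejects in $H_i$. Then $p_m-p_0\ge 0.01$, so by telescoping some index $i^*$ has $p_{i^*}-p_{i^*-1}\ge 0.01/m$.

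The single-sample distinguisher $\weakdist_{i}$, on input $z$, does the following: it draws fresh labeled training samples, draws $i-1$ samples from $\dtest$ and $m-i$ samples from $\dtrain$, places $z$ at position $i$, runs $\calA$, and outputs $1$ iff $\calA$ rejects. Then $\Pr_{z\sim\dtest}[\weakdist_i(z)=1]=p_i$ and $\Pr_{z\sim\dtrain}[\weakdist_i(z)=1]=p_{i-1}$, so $\weakdist_{i^*}$ has advantage at least $0.01/m$. One subtlety: $\weakdist_i$ hard-codes fresh samples from $\dtest$ and $\dtrain$. This is acceptable for an algorithm we run later with oracle access. Alternatively, we can fix these samples by averaging, which yields a randomized single-sample algorithm with the same expected advantage.

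The step where the procedure $\getweakdist$ actually does work, and the main obstacle, is finding a good index without knowing the $p_i$. We estimate every $p_i$ by running $H_i$ independently $O(m^2\log(m/\delta))$ times. Hoeffding's inequality together with a union bound over the $m+1$ hybrids gives, with probability $1-\delta$, estimates $\hat p_i$ with $|\hat p_i-p_i|\le 0.001/m$ for all $i$. We then output $\weakdist_{\hat i}$ for $\hat i=\arg\max_i(\hat p_i-\hat p_{i-1})$. Its true gap is at least $0.01/m-0.004/m=\Omega(1/m)$, since $\hat p_{\hat i}-\hat p_{\hat i-1}\ge \hat p_{i^*}-\hat p_{i^*-1}\ge 0.008/m$, and each estimate is off by at most $0.001/m$. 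The advantage is signed correctly: $\weakdist$ outputs $1$ more often on $\dtest$. If the definition requires only absolute advantage, this is immediate.

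If the paper's convention is that $\calA$'s $m$ samples are split differently (for example, $m$ counts only test samples, as stated), nothing changes, since the training side is always simulated honestly from $\dtrain$. The only delicate point is that completeness is guaranteed only for $\dtest=\dtrain$. This is exactly hybrid $H_0$, so the argument goes through, and the running time is polynomial in $m$, $\log(1/\delta)$, and the runtime of $\calA$.
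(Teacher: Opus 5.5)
\textbf{The gap: your output is not a single-sample algorithm.} Your hybrid computation is correct, and so is your index selection: estimating every $p_i$ to within $0.001/m$ and taking the arg-max gap gives a true gap of at least $0.006/m$. The problem is the object you output. Each time $\weakdist_{\hat i}$ is evaluated, it draws a fresh labeled training set and $m-1$ fresh samples from $\dtrain$ and $\dtest$. So it is a distinguisher with oracle access to both distributions, not an algorithm that ``only uses a single sample.'' The lemma needs the latter, and this matters downstream. In \tdsboost{} the distinguisher becomes the routing function $\next^{\calA}_{i,t}$. The output selector and hypothesis are evaluated on new points with no sample access to the conditional distributions $\calG^b_{i,t}$ (Remark~\ref{remark:output-properties}). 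Even during construction, nested resampling through earlier routing functions would make the cost grow exponentially with the depth $T$.

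Your fallback, fixing the samples ``by averaging'' to get ``the same expected advantage,'' does not close this. A randomly fixed distinguisher matches your gap only in expectation over the fixing. For example, its advantage could be $1$ on a $0.01/m$ fraction of fixings and $0$ elsewhere, so a single fixing is useless with probability $1-O(1/m)$. Averaging gives you the existence of a good fixing, not a procedure that outputs one with probability $1-\delta$.

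\textbf{What the paper does instead.} The paper's proof supplies exactly this missing step, in two stages.
\begin{enumerate}
\item It fixes a labeled training set $\overline{T}$ whose acceptance gap is at least $0.003$. This uses reverse Markov over $\ltrain$ plus an empirical check (Lemma~\ref{lem:sample-good-train-set}).
\item At the hybrid index $i^\star$, the hard-coded algorithm $\calA'_{\ba,\bb}$ has advantage in $[-1,1]$ with mean at least $1/(500m)$ over $(\ba,\bb)$. By reverse Markov, a $1/(1000m)$ fraction of draws give advantage at least $1/(1000m)$. Drawing $O(m\log(1/\delta))$ candidates and testing each with $\ell=O(m^2\log(m/\delta))$ fresh samples finds one with probability $1-\delta/3$. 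The threshold test $\wh{p_2}\ge 1/(5000m)$ certifies any returned candidate, so the paper never needs to locate $i^\star$: it loops over all $i$ and returns the first candidate that passes.
\end{enumerate}

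Your estimate-all-$p_i$ step is a legitimate alternative way to find the index. You could combine it with such a sample-and-verify loop; a single reverse Markov over the joint draw of training set and hybrid samples would also suffice. Without that loop, your argument does not prove the lemma. Note also that you analyze a procedure of your own design, whereas the lemma is about \getweakdist{} as given in Algorithm~\ref{alg:weak-distinguisher}.
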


\begin{algorithm}[p]
\LinesNumbered
\caption{$\getweakdist(\dtrain,\dtest,\calA,\delta)$}
\label{alg:weak-distinguisher}
\KwIn{Sample access to $\dtrain$ and $\dtest$, a rejecting TDS learning algorithm $\calA$ that uses at most $m$ samples from $\dtest$, and failure probability $\delta\in(0,1)$}
\KwOut{An algorithm $\calA'$ that distinguishes $\dtrain$ and $\dtest$ with advantage $\Omega(1/m)$, or \textsc{FAIL}}
\BlankLine

Let $C \geq C' \geq 1$ be sufficiently large constants and set $\overline{T}\leftarrow \emptyset$, $r \leftarrow C\log(1/\delta)$, and $\ell \leftarrow C m^2 \log(m/\delta)$\;

\BlankLine
\tcc{Find a training set $\overline{T}$ witnessing a noticeable acceptance gap}
\For{$t = 1,2,\dots, C'\log(1/\delta)$}{
\label{line:loop-find-training-ex}
    Sample a set $\ltrain$ from $\ldtrain$\;
    
    Sample sets $\test^{(1)},\dots,\test^{(r)}$ and $\bX_{\textsf{train}}^{(1)},\dots,\bX_{\textsf{train}}^{(r)}$, each of size $m$, from $\dtest$ and $\dtrain$ respectively\;
    
    Compute
    \[
        \wh{p_1} \ :=\ \frac{1}{r}\sum_{i=1}^r\Big(
        \mathbbm{1}\{\calA(\ltrain, \bX_{\textsf{train}}^{(i)})\text{ accepts}\}
        \ -\
        \mathbbm{1}\{\calA(\ltrain, \test^{(i)})\text{ accepts}\}
        \Big)\,;
    \]
    \If{$\wh{p_1} \ge 0.004$}{
        Set $\overline{T} \leftarrow \ltrain$ and \textbf{break}\;
    }
}

\tcc{Search for a single-sample weak distinguisher}
\For{$i = 1,2,\dots, m$}{ \label{line:loop-hybrid}
    \For{$t = 1,2,\dots, C' m\log(1/\delta)$}{
        Sample $\ba_1,\dots,\ba_{m-i} \sim \dtrain$\;
        
        Sample $\bb_1,\dots,\bb_{i-1} \sim \dtest$\;
        
        Define the (randomized) algorithm $\calA'_{\ba,\bb}$: on input $x\in\mathbb{R}^n$, output
        \[
            \calA\big(\overline{T},\{\bb_1,\dots,\bb_{i-1},x,\ba_1,\dots,\ba_{m-i}\}\big)\,;
        \]
        
        Sample $\bc_1,\dots,\bc_{\ell} \sim \dtrain$ and $\bd_1,\dots,\bd_{\ell} \sim \dtest$\;
        
        Compute
        \[
            \wh{p_2}\ :=\ \frac{1}{\ell}\sum_{j=1}^{\ell}\Big(
            \mathbbm{1}\{\calA'_{\ba,\bb}(\bc_j)\text{ accepts}\}
            \ -\
            \mathbbm{1}\{\calA'_{\ba,\bb}(\bd_j)\text{ accepts}\}
            \Big)\,;
        \]
        \If{$\wh{p_2} \ge \frac{1}{5000m}$}{
            \Return $\calA'_{\ba,\bb}$\; 
        }
    }
}
\BlankLine
\Return \textsc{FAIL}\;
\end{algorithm}

We denote by $\calA(\ltrain, \test)$ the result of running $\calA$ on the appropriate number of labeled samples $\ltrain$ drawn from $\ldtrain$ and a test set $\test$ of $m$ samples drawn from $\dtest$. We now start by proving the following lemma.

\begin{lemma}
\label[lemma]{lem:sample-good-train-set}
    \sloppy Consider distributions $\dtrain$ and $\dtest$ and let $\calA$ be a rejecting algorithm drawing $m$ samples from $\dtest$ and $\delta \in (0,1/3)$, then with probability $1-\delta/3$, $\getweakdist(\dtrain, \dtest, \calA, \delta)$ will find a set $\overline{T}$ such that 
 \[\Pr_{\bX \sim (\dtrain)^{\otimes m}} [ \calA(\overline{T}, \bX) \text{ accepts}] - \Pr_{\bX \sim (\dtest)^{\otimes m}} [ \calA(\overline{T}, \bX) \text{ accepts}] \geq 0.003 \]
\end{lemma}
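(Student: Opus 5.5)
Define, for a fixed training set $\overline{T}$, the gap
\[
p_1(\overline{T}) := \Pr_{\bX \sim (\dtrain)^{\otimes m}}[\calA(\overline{T},\bX)\text{ accepts}] - \Pr_{\bX\sim(\dtest)^{\otimes m}}[\calA(\overline{T},\bX)\text{ accepts}].
\]
The plan has three steps. First, show that a freshly sampled $\ltrain$ has a noticeable gap with constant probability. Second, show that the empirical estimate $\wh{p_1}$ tracks $p_1(\ltrain)$ to within $0.001$. Third, amplify the constant success probability over the $C'\log(1/\delta)$ iterations of the loop in line~\ref{line:loop-find-training-ex}.

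\emph{Step 1.} By completeness of the $(\eps,0.01)$-TDS learner, when run on $\ltrain$ together with a test set drawn from $\dtrain$ it accepts with probability at least $0.99$. Averaging over $\ltrain$,
\[
\mathbb{E}_{\ltrain}\Big[\Pr_{\bX\sim(\dtrain)^{\otimes m}}[\calA(\ltrain,\bX)\text{ accepts}]\Big] \ge 0.99 .
\]
Since $\calA$ is rejecting on $(\dtrain,\dtest)$, we also have
\[
\mathbb{E}_{\ltrain}\Big[\Pr_{\bX\sim(\dtest)^{\otimes m}}[\calA(\ltrain,\bX)\text{ accepts}]\Big]\le 0.98 .
\]
Together these give $\mathbb{E}[p_1(\ltrain)]\ge 0.01$, and $p_1\le 1$ always. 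A reverse-Markov (averaging) argument then gives
\[
0.01 \le \mathbb{E}[p_1] \le 0.005 + \Pr[p_1\ge 0.005],
\]
so $\Pr[p_1(\ltrain)\ge 0.005]\ge 0.005$. This is a constant.

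\emph{Step 2.} Conditioned on $\ltrain$, the quantity $\wh{p_1}$ is an average of $r$ i.i.d. terms in $[-1,1]$ with mean $p_1(\ltrain)$. By Hoeffding, $|\wh{p_1}-p_1(\ltrain)|\le 0.001$ except with probability $2e^{-\Omega(r)}$. With $r=C\log(1/\delta)$ and $C$ large relative to $C'$, a union bound over all $C'\log(1/\delta)$ iterations keeps the total estimation failure probability below $\delta/6$. On the complementary event, two consequences follow:
\begin{itemize}
\item whenever the loop breaks, we have $p_1(\overline{T})\ge \wh{p_1}-0.001\ge 0.003$, which is exactly the claimed property;
\item whenever $p_1(\ltrain)\ge 0.005$, we have $\wh{p_1}\ge 0.004$, so the loop breaks at that iteration.
\end{itemize}

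\emph{Step 3.} The iterations are independent. The probability that no iteration has $p_1(\ltrain)\ge0.005$ is at most $(1-0.005)^{C'\log(1/\delta)}\le\delta/6$ for $C'$ a sufficiently large constant. Adding the two failure probabilities gives success with probability at least $1-\delta/3$.

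The main subtlety is the ordering of constants. $C'$ must be large enough that $0.995^{C'\log(1/\delta)}\le \delta/6$. Then $C\ge C'$ must be large enough that the Hoeffding failure probability, union-bounded over $C'\log(1/\delta)$ iterations, stays below $\delta/6$; this needs $e^{-\Omega(C\log(1/\delta))}\cdot C'\log(1/\delta)\le\delta/6$, which holds once $C$ is large relative to $C'$. One also needs $\log(1/\delta)\ge\log 3$, so that small-$\delta$ issues do not arise.
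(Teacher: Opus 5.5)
Your proposal is correct and follows the paper's proof essentially step for step. Both arguments use:
\begin{itemize}
\item a Hoeffding bound over all $C'\log(1/\delta)$ iterations, costing $\delta/6$;
\item an expected gap of at least $0.01$, obtained from completeness (acceptance $\ge 0.99$ on train/train) together with the rejecting property (acceptance $\le 0.98$ on train/test);
\item reverse Markov, giving $\Pr[p_1(\ltrain)\ge 0.005]\ge 0.005$;
\item independent repetition, costing another $\delta/6$.
\end{itemize}
Your write-up is, if anything, more careful than the paper's: the paper states the failure event as $\overline{T}\neq\emptyset$ (a typo for $\overline{T}=\emptyset$) and does not explain why $C$ must be large relative to $C'$.
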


\begin{proof}
We'll assume that
    \[\left| \wh{p_1} - \Pr_{\bX \sim (\dtrain)^{\otimes m}} [\calA(\ltrain,\bX) \text{ accepts}] + \Pr_{\test \sim (\dtest)^{\otimes m}} [\calA(\ltrain,\bX) \text{ accepts}] \right| \leq  0.001 \]
in every iteration of the loop on Line \ref{line:loop-find-training-ex}, which holds with probability at least $1 - \delta/6$ by a Hoeffding bound. Notably, when this event occurs, we will output a set $T$ satisfying the statement of the lemma so long as $T \not= \emptyset$. Thus, it remains to show that the probability of $T \not = \emptyset$ is at most $\delta/6$.

Since $\calA$ is rejecting and rejects samples from $\dtrain$ with probability at most $0.01$, we have that
	\[\Pr_{\ltrain, \bX \sim (\dtrain)^{\otimes m}} [ \calA(\ltrain, \bX) \text{ accepts}] - \Pr_{\ltrain, \bX \sim (\dtest)^{\otimes m}} [ \calA(\ltrain, \bX) \text{ accepts}]  \geq 0.01 \]
So we conclude that 
	\[\E_{\ltrain} \left[ \Pr_{\bX \sim (\dtrain)^{\otimes m}} [ \calA(\ltrain, \bX) \text{ accepts}] - \Pr_{\bX \sim (\dtest)^{\otimes m}} [ \calA(\ltrain, \bX) \text{ accepts}] \right]  \geq 0.01 \]
Since the quantity in the expectation is at most $1$, it follows by reverse Markov that
	\[\Pr_{\ltrain} \left[ \Pr_{\bX \sim (\dtrain)^{\otimes m}} [ \calA(\ltrain, \bX) \text{ accepts}] - \Pr_{\bX \sim (\dtest)^{\otimes m}} [ \calA(\ltrain, \bX) \text{ accepts}] \geq 0.005 \right] \geq 0.005 \]

We now note that if we draw a set $\ltrain$ satisfying the event above, then $\overline{T} \not = \emptyset$. However, since the algorithm repeats $C' \log(1/\delta)$ times, this happens with probability at most $\delta/6$, completing the proof.

\end{proof}

We now are ready to prove \Cref{lemma:single-sample-advantage}.

\begin{proof} of \Cref{lemma:single-sample-advantage}. 
We will assume throughout that 
    \[\left| \wh{p_2} - \Pr_{\bx \sim \dtrain} [\calA'_{\ba,\bb}(\bx) \text{ accepts }] + \Pr_{\bx \sim \dtest} [\calA'_{\ba,\bb}(\bx) \text{ accepts}] \right| \leq \frac{1}{10^5m} \]
in every iteration of the loops on Line \ref{line:loop-hybrid}, which holds by a Hoeffding bound with probability at least $1 - \delta/3$. Moreover, we will assume that the set $\overline{T}$ satisfies
  \[\Pr_{\bX \sim (\dtrain)^{\otimes m}} [ \calA(\overline{T}, \bX) \text{ accepts}] - \Pr_{\bX \sim (\dtest)^{\otimes m}} [ \calA(\overline{T}, \bX) \text{ accepts}] \geq 0.003 \]
	which again happens with probability $1 - \delta/3$ by \Cref{lem:sample-good-train-set}. Assuming both these events hold, it remains to prove that we output FAIL with probability at most $\delta/3$.
    
    For brevity, we will assume that $\calA$ outputs $1$ when it accepts and $0$ when it rejects. To start, we argue that there exists an $i^\star \in [m]$ such that 
	 	\begin{equation}
	 	\label{eq:moo}
	 		\Pr[\calA(\overline{T},\{\bb_1, \dots, \bb_{i^\star-1}, \ba_{i^\star}, \dots \ba_m\}) = 1] - \Pr[\calA(1,\{\bb_1, \dots, \bb_{i^\star}, \ba_{i^\star+1}, \dots \ba_m\}) = 1] \geq \frac{1}{500m} 
	 	\end{equation}
	where $\ba_1, \dots, \ba_m \sim \dtrain$ and $\bb_1, \dots, \bb_m \sim \dtest$. To see this, we apply a hybrid argument. 
 		\begin{align*}
 		0.003 &\leq \Pr[\calA(\overline{T}, \{\ba_1, \dots, \ba_m\}) = 1] - \Pr[\calA(\overline{T}, \{\bb_1, \dots, \bb_m\}) = 1] \\
 		&= \sum_{i = 1}^{m} \Pr[\calA(\overline{T}, \{\bb_1, \dots, \bb_{i-1}, \ba_{i}, \dots \ba_m\}) = 1] - \Pr[\calA(\overline{T}, \{\bb_1, \dots, \bb_{i}, \ba_{i+1}, \dots  \ba_m\}) = 1].
 		\end{align*}
 	An averaging argument then implies that an $i^\star$ as in \Cref{eq:moo} exists.
    
    We now consider the iteration of the loop on Line \ref{line:loop-hybrid} where $i = i^\star$. Note that for random variables $\bc, \ba_1, \dots, \ba_{i-1} \sim \dtrain$ and $\bd, \bb_1, \dots, \bb_{m-i} \sim \dtest$, we have that
	 \begin{align*}
 			&\E_{\ba, \bb, \bc, \bd}[ \calA'_{\ba, \bb}(\bc) - \calA'_{\ba, \bb}(\bd)] \\
 			&= \Pr[\calA(\overline{T}, \{\bb_1, \dots, \bb_{i^\star-1}, \bc, \ba_1 \dots \ba_{m-i}\}) = 1] - \Pr[\calA(\overline{T}, \{\bb_1, \dots, \bb_{i-1}, \bd, \ba_1, \dots  \ba_{m-i}\}) = 1] \\
 			&\geq \frac{1}{500m}
 	\end{align*}
 	 On the other hand, we have
 		\[\E_{\ba, \bb,\bc,\bd}[ \calA'_{\ba, \bb}(\bc) - \calA'_{\ba, \bb}(\bd) ] = \E_{\ba, \bb} \left[ \E_{\bc, \bd} \left[ \calA'_{\ba, \bb}(\bc) - \calA'_{\ba, \bb}(\bd) \right] \right] \]
 	Since $\left| \E_{\bc,\bd} \left[ \calA'_{\ba, \bb}(\bc) - \calA'_{\ba, \bb}(\bd) \right] \right|$ is at most $1$, reverse Markov then yields that with probability $1/1000m$ over $\ba, \bb$, we have that
 	\[ \Pr_{\bx \sim \dtrain} \left[ \calA'_{\ba, \bb}(\bx) = 1 \right] - \Pr_{\bx \sim \dtest} \left[ \calA'_{\ba, \bb}(\bx) = 1 \right] \geq \frac{1}{1000m}. \]
    Note that if we sampled such values $\ba,\bb$ then we would output $\calA'_{\ba,\bb}$ as
        \[\wh{p_2} \geq \frac{1}{1000m} - \frac{1}{10^5m} \geq \frac{1}{2000m}.\]
    by our assumption on the accuracy of $\wh{p_2}$. Thus, the probability of failure is bounded by 
        \[\left(1 - \frac{1}{1000m} \right)^{C' m \log(1/\delta)} \leq \delta/3 \] 
    as desired.	
\end{proof}

\subsection{Boosting the Weak Distinguisher to Obtain a PQ Learner}\label{section:boosting}

Our main result is to provide a PQ learning algorithm that has access to a TDS learning oracle and runs in time polynomial to the oracle's run time. In particular, we prove the following result.

\begin{theorem}[PQ learning via TDS learning]\label[theorem]{theorem:pq-via-tds-main}
    Let $\calA$ be an $(\eps,0.01)$-TDS learner for some class $\calC$ that runs in time $T_\calA(\eps)$ and has total (training plus test) sample complexity $m_\calA(\eps)$. Then, \Cref{algorithm:boosting} is a PQ learner for $\calC$ up to error and rejection rate $O(\eps)$ and probability of failure $\delta$. In particular, \Cref{algorithm:boosting} calls $\calA$ $\poly(m_\calA,1/\eps,\log 1/\delta)$ times, uses $\poly(m_\calA,1/\eps,\log 1/\delta)$ additional samples from the training and test distributions, and runs in time $T_\calA \cdot \poly(m_\calA,1/\eps,\log 1/\delta)$.
\end{theorem}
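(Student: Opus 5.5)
We would prove \Cref{theorem:pq-via-tds-main} by building, in the style of the martingale boosting of \cite{long2005martingale}, a leveled branching program whose leaves induce the partition $\calR_1,\dots,\calR_k$ from the technical overview. Each region must be of type (i) ($\calA$ accepts on the conditioned pair), (ii) (mostly test mass), or (iii) (mostly training mass). Given such a partition, the PQ guarantees follow exactly as in the overview. The selector $g$ rejects type-(ii) regions, and this costs at most $\eps$ training mass, since $\sum_{(ii)} \Pr_{\dtrain}[\calR_i]\le \eps\sum \Pr_{\dtest}[\calR_i]\le\eps$. The hypothesis $h$ uses the TDS output $h_i$ on type-(i) regions, so its conditional test error there is at most $\eps$; summed with weights $\Pr_{\dtest}[\calR_i]$ this gives at most $\eps$. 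On type-(iii) regions $h$ is arbitrary, and these carry at most $\eps$ test mass. The total is $O(\eps)$. Regions whose mass under both distributions is below $\eps/k$ can simply be rejected, or lumped into the error; this costs $O(\eps)$ overall and avoids simulating conditional distributions of tiny mass.

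\textbf{Construction.} Let $L=\poly(m,1/\eps)$ levels, with nodes at each level indexed by a counter. A sample $x$ is routed from the root. At a node $v$ with region $\calR_v$, we simulate $\dtrain|_{\calR_v}$ and $\dtest|_{\calR_v}$ by rejection sampling, which is efficient whenever $v$ has non-negligible mass under the relevant distribution. We then run $\calA$ with repetitions to estimate whether it is rejecting on this pair in the sense of \Cref{definition:rejecting-algo}.
\begin{itemize}
\item If $\calA$ accepts with probability above $0.98$, then $v$ is a type-(i) leaf: we store the output hypothesis, which by soundness has error at most $\eps$ on $\dtest|_{\calR_v}$ with probability $0.99$. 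We boost this confidence by independent repetitions, validating a candidate only through acceptance frequency, since soundness holds conditionally on acceptance.
\item Otherwise, we invoke \Cref{lemma:single-sample-advantage} to get a single-sample distinguisher $\weakdist_v$ with advantage $\Omega(1/m)$ between the conditioned distributions. Then $x$ moves to the next level, one step ``right'' if $\weakdist_v(x)$ accepts and ``left'' otherwise.
\end{itemize}
Note that we must use the realizability of the conditioned training distribution: $\dtrain|_{\calR_v}$ is still labeled by $f\in\calC$, so $\calA$'s guarantees apply at every node.

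\textbf{Analysis.} Consider the position of a sample after $L$ steps, along paths that never hit an accepting leaf. A training point drifts right relative to a test point by $\Omega(1/m)$ per step in expectation, through the advantage of $\weakdist_v$ at every visited node. By the martingale/random-walk argument of Long–Servedio with $L=\Theta(m^2\log(1/\eps))$, we then set a threshold. Nodes to the right of it have test mass at most $\eps$ times their training mass, giving type (iii); nodes to the left have training mass small relative to test mass, giving type (ii). To obtain the per-region ratio guarantees we also need to argue per node, not just in aggregate. For this we replace the exact ratio conditions with aggregate versions, $\Pr_{\dtrain}[g=0]\le\eps$ and test mass on type-(iii) at most $\eps$, which are all the final computation uses.

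\textbf{Main obstacle.} The hard part is that a TDS learner can switch from rejecting to accepting on conditioned distributions. The drift argument therefore has to be run on a walk that is absorbed at accepting leaves, and we must show absorption only helps: absorbed mass is already handled by type (i). The other difficulty is that the branching program must have merging nodes with polynomial width ($\poly(m,1/\eps)$ nodes per level). Sampling, Hoeffding estimates and union bounds over all $\poly(m,1/\eps)$ nodes then give the claimed $\poly(m_\calA,1/\eps,\log 1/\delta)$ oracle calls and running time. I would first try to adapt Long–Servedio's potential analysis directly, treating accepting leaves as stopped walks, and check that each level's distinguisher advantage is measured under the correct conditional distributions.
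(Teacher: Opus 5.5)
\textbf{The gap: your routing is unbalanced.} Your architecture matches the paper's: a $\Theta(m^2\log(1/\eps))$-level merging branching program, small-mass and accepting leaves, routing by the distinguisher of \Cref{lemma:single-sample-advantage}, and aggregate rather than per-region guarantees at the end. The step that fails is the routing rule ``right if $\weakdist_v(x)=1$, left otherwise'', together with the claim that \emph{relative} drift is enough to set a threshold.

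\Cref{lemma:single-sample-advantage} only gives $\Pr_{\dtrain|_{\calR_v}}[\weakdist_v=1]-\Pr_{\dtest|_{\calR_v}}[\weakdist_v=1]\ge\Omega(1/m)$. Both probabilities can be far from $\tfrac12$ and can vary from node to node. Suppose that at nodes right of the middle of their level both step-right probabilities are below $\tfrac12$, and at nodes left of it both are above $\tfrac12$. Then both walks concentrate within $O(m)$ of the middle, and every threshold misclassifies a constant fraction of one of them.

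The Long--Servedio analysis needs a \emph{two-sided} guarantee at every node. Training points must step right with probability at least $\tfrac12+\Omega(1/m)$, and test points must step left with probability at least $\tfrac12+\Omega(1/m)$. Only then does the training position dominate a biased walk, so the threshold $T/2$ works.

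The missing ingredient is the rebalancing step (\textsc{Balance}, \Cref{algorithm:balancing}):
\begin{enumerate}
\item Estimate, to accuracy $O(1/m)$, the probability $q_v$ that $\weakdist_v$ outputs $1$ on the mixture $\tfrac12\dtrain|_{\calR_v}+\tfrac12\dtest|_{\calR_v}$.
\item With a suitable probability, replace the output by a constant, so that the routed bit equals $1$ with probability about $\tfrac12$ on the mixture.
\end{enumerate}
The weight on $\weakdist_v$ stays at least $\tfrac12$, so the advantage shrinks by at most a constant factor. Balance plus advantage then yields the two-sided bound.

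\textbf{Smaller points.} First, a small-mass leaf must be created when \emph{either} conditional mass is below the cutoff, not both, because the rare side cannot be rejection-sampled. Label it $0$ if training is rare and $1$ (with any hypothesis) if test is rare; this is what gives the $T^2\cdot\pmin$ accounting.

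Second, at accepting leaves, acceptance frequency cannot tell you which accepted hypothesis is good, since there are no test labels. A single accepted run is bad with constant probability, which does not survive a union bound over $\poly(m,1/\eps)$ nodes. The paper instead takes the majority vote of $O(\log(T/\eps\delta))$ independently accepted hypotheses. With high probability most of them have error at most $\eps$ on $\dtest|_{\calR_v}$, so the vote has error $O(\eps)$. You also need slack between the estimated acceptance cutoff and the $0.02$ rejecting threshold, so that every non-accepted node is provably rejecting. The paper uses $\amin=0.95$ with $0.01$-accurate estimates.

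Third, your absorption worry resolves directly. In the analysis, let each early leaf keep routing by some fictitious two-sided rule; it may even depend on the true source $b$. Points that are never absorbed follow identical paths in both programs, so the bad event at level $T$ can only shrink. Hence the bound of \Cref{lemma:martingale-boost} transfers.
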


Before describing the algorithm and the idea of the proof, we make the following observations.

\begin{remark}\label[remark]{remark:output-properties}
The output of $\tdsboost$ (\Cref{algorithm:boosting}) consists of a hypothesis and a selector. Their evaluation requires access to $(\frozen(i,t))_{i,t}$, $(\lbl(i,t))_{i,t}$, and to the functions $(\next_{i,t}^\calA)_{i,t}$. Because evaluating $\next_{i,t}^\calA$ requires oracle access to $\calA$, the representations of the hypothesis and selector must encode sufficient information to invoke this oracle, e.g., by including the code of $\calA$.
Moreover, the hypothesis and the selector are both randomized functions.
\end{remark}

\begin{remark}\label[remark]{remark:beyond-distribution-free}
    In the premise of \Cref{theorem:pq-via-tds-main}, we assume that $\calA$ is a distribution-free TDS learning algorithm, i.e., it satisfies the guarantees of \Cref{definition:tds-learning} for every training distribution $\dtrain$. This assumption can be relaxed. Indeed, \Cref{algorithm:boosting} invokes $\calA$ only on samples drawn from truncated versions of $\dtrain$. Since truncation preserves any property $\calI$ that holds almost surely (e.g., boundedness or the existence of a margin with respect to a target concept $f$), the same argument yields a PQ learner under any distribution $\dtrain$ satisfying $\calI$, provided that $\calA$ is a TDS learner for distributions satisfying $\calI$.
\end{remark}

\begin{algorithm}[p]
\setcounter{AlgoLine}{0}
\LinesNumbered
\caption{$\tdsboost(\calA,\bar{\calD}_1,\Dtest,\eps,\delta)$}\label{algorithm:boosting}
\KwIn{TDS learning oracle $\calA$ for some class $\calC$, sample access to (training) distribution $\bar\calD_1$ realizable by $\calC$, sample access to unlabeled (test) distribution $\Dtest$, and $\eps,\delta\in(0,1)$}
\KwOut{A hypothesis $\hat{h}^{\calA}:\calX\to \{\pm 1 \}$ and a selector $\hat{g}^{\calA}:\calX\to \{0,1\}$}
\BlankLine

Let $m_{\calA}$ be the sample complexity of $\calA$ with parameters $(\eps,0.01)$\;

Let $C\ge 1$ be a sufficiently large constant and let $T=C \,m_\calA^2 \log\frac{1}{\eps}$, and $\delta'= \frac{\delta}{C T^{10}}, r=C\,\log\frac{T}{\eps \delta}, \pmin = \frac{3\eps}{4T(T+1)}, \amin=0.95, \gamma_1 = \frac{\eps}{4T(T+1)}, \gamma_2 = \frac{1}{C\,m_\calA}$\;

\BlankLine
Set $V_{1,1}(\bx) \equiv 1$, $V_{i,t}(\bx) \equiv 0$ for $2\le t\le T$, $1\le i\le t$; \Comment{$V_{i,t}(\bx) = \mathbbm{1}\{\bx\text{ reaches node }(i,t)\}$.}

Set $\next_{i,t}^{\calA}(\bx)\equiv \perp$ for all $1\le i \le t \le T$; \Comment{Routing functions of the branching program}

Set $\lbl(i,T) \leftarrow \mathbbm{1}\{i\ge T/2\}$ for all $1\le i\le T$; \Comment{Assign node to $\calD_b$ (by giving label $b$)}

Set $h_{i,t}(\bx) \equiv 1$ for all $1\le i \le t \le T$; \Comment{Each node corresponds to some hypothesis}

\tcc{Constructing the Branching Program $P=(\next^\calA,\lbl)$}

\For{$t = 1,2,\dots, T$}{
\For{$i = 1, 2, \dots, t$}{
    Compute $(\gamma_1,\delta')$-estimates $\hat p_{i,t}^b$ of $p_{i,t}^b=\Pr_{\bx\sim \calD_b,P}[V_{i,t}(\bx)=1] $ for $b\in\{0,1\}$\;
    
    \If{either of these estimates is less than $\pmin$ and $t<T$}{
        Set $\lbl(i,t) \leftarrow 1-b$, with $\hat p_{i,t}^b < \pmin$; \Comment{If one distribution rarely visits a node, assign the node to the other distribution}
    }
    \Else{
        Set $\calG_{i,t}^b$ to be the conditional distribution $\calD_b \;|\; V_{i,t}(\bx) = 1$ for $b\in\{0,1\}$\;

        Compute $(0.01,\delta')$-estimate $\hat a_{i,t}$ of the probability that algorithm $\calA$ accepts when run on inputs $(\bar\calG_{i,t}^1,\calG_{i,t}^0,\eps,\delta)$\; 

        \If{$\hat a_{i,t} \ge \amin$}{
            Set $\lbl(i,t) \leftarrow 1$\;
            
            Run algorithm $\calA$ on inputs $(\bar\calG_{i,t}^1,\calG_{i,t}^0,\eps,\delta)$ independently $r$ times and let $h_{i,t}$ be the majority function of the hypotheses produced by accepting runs\;
        }
        \ElseIf{$t<T$}{
        Set $\weakdist_{i,t}^{\calA} \leftarrow \getweakdist(\bar\calG_{i,t}^1, \calG_{i,t}^0, \calA, \delta')$ \;
        
        Set $\calG_{i,t}$ to be the mixture $\frac{1}{2}\calG_{i,t}^1+\frac{1}{2}\calG_{i,t}^0$\;
        
        Compute $(\gamma_2,\delta')$-estimate $\hat q_{i,t}$ of probability that $\weakdist_{i,t}^{\calA}(\bx) = 1$ over $\bx\sim \calG_{i,t}$\;
        
        Set $\hat\bw_{i,t}^{\calA}$ to be the randomized function $\hat\bw_{i,t}^{\calA}(\bx) = \textsc{Balance}(\hat q_{i,t}, \weakdist_{i,t}^{\calA}(\bx))$\;
        
        Set $\next_{i,t}^{\calA}$ to be the randomized function $\next_{i,t}^{\calA}(\bx) = (i+\hat\bw_{i,t}^{\calA}(\bx),t+1)$\;
        
        Update the function $V_{i+b,t+1}(\bx)$ to take the following values: $V_{i+b,t+1}(\bx) + V_{i,t}(\bx)\mathbbm{1}\{\hat\bw_{i,t}^{\calA}(\bx)=b\}$ for all $b\in\{0,1\}$\;
        }
    }
}
}

We set the hypothesis $\hat h^{\calA}$ and the selector $\hat g^{\calA}$ to be evaluated as follows:\hspace{7em}
We begin from $N=(1,1)\in [T]^2$ and move to $\next_{N}^{\calA}(\bx)$ recursively until we reach $N\in[T]^2$ with $\next_{N}^{\calA}(\bx) = \perp$.
We set $\hat h^{\calA}(\bx)=h_N(\bx)$ and $\hat g^{\calA}(\bx) = \lbl_{N}$\;
\end{algorithm}

\paragraph{Algorithm Idea} $\tdsboost$ (\Cref{algorithm:boosting}) is an extension of the Martingale boosting algorithm of \cite{long2005martingale}, which was proposed in the context of boosting a weak PAC learner to obtain a strong PAC learning algorithm with favorable noise tolerance properties compared to prior work. Here, we are building on these ideas in the context of learning with distribution shift, by using them to boost the distinguishing advantage provided by a rejecting TDS learner according to \Cref{lemma:single-sample-advantage}. 

In particular, we consider the following contrived learning problem. Suppose that we wish to learn a function $g: \calX\times \{0,1\} \to \{0,1\}$, where $g(x,b) = b$, under a distribution $\calG$. To draw $(\bx,\bb)$ from $\calG$, with probability $\frac{1}{2}$, we draw $\bx \sim \Dtrain$ and set $\bb=1$. Otherwise, we draw $\bx \sim \Dtest$ and set $\bb=0$. Due to \Cref{lemma:single-sample-advantage}, as long as the TDS learner $\calA$ rejects with probability at least $\Omega(1)$ when run on the distributions $\bar\calD_1, \Dtest$, we have that \Cref{alg:weak-distinguisher} is a weak learner for the above learning problem, i.e., outputs a hypothesis $\weakdist^\calA:\calX\to \{0,1\}$ such that:
\[
    \Pr_{(\bx,\bb)\sim \calG}\Bigr[\weakdist^\calA(\bx) = g(\bx,\bb) \Bigr] \ge \frac{1}{2} + \gamma\,,
\]
where $\gamma = \Omega(1/m_\calA)$ and $m_\calA$ is the total sample complexity of $\calA$. For the remainder of this subsection, we will use the notation $\Dtrain$ to denote the training distribution $\dtrain$, and $\Dtest$ to denote the test distribution $\dtest$.\footnote{This notation is convenient because we can use $\calD_{b}$ for $b\in\{0,1\}$ to refer to either the training distribution ($b=1$, and also $g(\bx,b)=1$) or the test distribution ($b=0$, and also $g(\bx,b)=0$).}

Note that since $\calA$ is a distribution-free TDS learner, at a first glance, one might think that \Cref{alg:weak-distinguisher} is a distribution-free weak learner for the problem of learning $g$. If that were indeed the case, then we could apply Martingale boosting in order to obtain a strong learner for $g$ whose output $\hat g$ only depends on $\bx$ and not $b$.\footnote{The same would be true if we applied any boosting algorithm that outputs a hypothesis $\hat g$ with the following property: If the outputs of the weak learners only depend on $\bx$ and not $b$, then $\hat g$ is a junta on $\bx$ as well.} This, in turn, would give a PQ learner, because $\hat g$ would strongly distinguish between the training and test distributions and we would, therefore, have the following high-probability bounds on the error and rejection rates:
\begin{align*}
    \Pr_{\bx\sim \Dtest}[1\neq f(\bx), \hat g(\bx) = 1] &\le \Pr_{\bx\sim \Dtest}[ \hat g(\bx) \neq 0] \le \eps \tag{Here we used the trivial hypothesis $\hat h \equiv 1$} \\
    \Pr_{\bx\sim \Dtrain}[\hat g(\bx) = 0] &= \Pr_{\bx\sim \Dtrain}[\hat g(\bx) \neq 1]\le \eps
\end{align*}

However, the above attempt fails, because \Cref{alg:weak-distinguisher} is not really distribution-free. In particular, \Cref{alg:weak-distinguisher} is guaranteed to work only when the distributions $\Dtrain$ and $\Dtest$ are such that the algorithm $\calA$ rejects with non-trivial probability when run on samples from these distributions (see \Cref{definition:rejecting-algo}). Fortunately, if $\calA$ rejects with very small probability, then we may use it to obtain a hypothesis with low test error. Moreover, we can incorporate this case in the Martingale boosting algorithm without introducing any technical complications, thereby obtaining the desired result.

\paragraph{Algorithm Description} The algorithm $\tdsboost$ creates a branching program and uses it to partition the domain $\calX$ into a collection of regions for each of which it is either safe to abstain (selector takes the value $0$) or to predict (selector takes the value $1$).\footnote{The branching program, and hence the corresponding partition of the domain $\calX$, is randomized, as its routing functions $\next_{i,t}^\calA$ are randomized.} 

The branching program consists of nodes of the form $(i,t)\in [T]^2$ with $i\le t$, routing functions for each node $\next_{i,t}^\calA:\calX\to \{(i,t+1),(i+1,t+1),\perp\}$, labels $\lbl_{i,t}\in \{0,1\}$ for leaf nodes (i.e., nodes with $\next_{i,t}^\calA = \perp$), and auxiliary functions $h_{i,t}:\calX\to \{\pm 1\}$ for leaf nodes. 

A point $x \in \calX$ starts from node $N=(1,1)$ and is routed through the branching program by applying the function $\next_N^\calA(x)$ recursively until it reaches a leaf node $N$ with $\next_N^{\calA}(x)=\perp$. The aforementioned partition contains one region for each leaf node $N$ consisting of the points $x$ that are routed to $N$. The label $\lbl_N$ of a leaf node corresponds to the suggested value of the selector, and the hypothesis $h_N$ corresponds to the suggested classifier for the elements of the region corresponding to node $N$. 

The algorithm creates the branching program using an iterative top-down approach. It considers for each node $(i,t)$ the distributions $\calG_{i,t}^1$ and $\calG_{i,t}^0$ which correspond to $\Dtrain$ and $\Dtest$, respectively, conditioned on visiting node $(i,t)$ through the branching program constructed so far. 

If sampling from either of these distributions (through rejection sampling) is too costly, then at least one of $\Dtrain, \Dtest$ rarely visits the node $(i,t)$ and we make $(i,t)$ a leaf node, with label opposite to the distribution that rarely visits it. Recall that the labels of the leaf nodes determine the values of the selector. This step is safe to do because it is always safe to abstain on regions of low probability under $\Dtrain$, as well as give predictions on regions of low probability under $\Dtest$ (see \Cref{definition:pq-learning}).

The aforementioned conditioning defines a new TDS learning problem. If the given TDS learner $\calA$ is very likely to accept, then we use it to obtain a hypothesis $h_{i,t}$ with high accuracy on $\calG_{i,t}^0$ and set its label to $1$. 
Otherwise, the algorithm $\calA$ rejects with non-trivial probability and, by \Cref{lemma:single-sample-advantage}, we may use \Cref{alg:weak-distinguisher} to obtain a weak distinguisher $\weakdist_{i,t}^\calA$ for the distributions $\calG_{i,t}^1$ and $\calG_{i,t}^0$. Having $\weakdist_{i,t}^\calA$ at hand, we are able to construct the routing function $\next_{i,j}^\calA$ by using the rebalancing trick of \cite{long2005martingale}, which is given by \Cref{algorithm:balancing}.

\begin{algorithm}[h]
\setcounter{AlgoLine}{0}
\LinesNumbered
\caption{$\textsc{Balance}(q,w)$}\label{algorithm:balancing}
\KwIn{$q\in[0,1],\; w\in\{0,1\}$}
\KwOut{A random variable $\hat w$ over $\{0,1\}$}
\BlankLine

Let $b = \argmin_{b'\in\{0,1\}} |b' - q|$ and $\xi$ be Bernoulli random variable that is $1$ with probability $\frac{1}{1 + 2|b-q|}$\;

Set $\hat w \leftarrow w\,\xi + (1-b)(1-\xi)$\;
\end{algorithm}

\paragraph{Comparison to Martingale Boosting} Similarly to standard Martingale boosting, the routing functions are chosen so that samples from the distribution $\Dtrain$ are biased to move towards nodes $(i,t)$ where $i$ is large, and samples from the distribution $\Dtest$ are biased towards nodes with $i$ small. This is accomplished by using the weak distinguishers provided by \Cref{alg:weak-distinguisher} to construct the routing functions. In particular, we inherit the following result from \cite{long2005martingale} which essentially shows that nodes of the form $(i,T)$ with $i<T/2$ are rarely visited by samples of $\Dtrain$ and nodes $(j,t)$ with $i\ge T/2$ are rarely visited by samples of $\Dtest$.

\begin{lemma}[Implicit in Theorem 3 from \cite{long2005martingale}]
\label[lemma]{lemma:martingale-boost}
    Let $\calG$ be the following distribution over $\calX\times\{0,1\}$. To draw $(\bx,\bb)$ from $\calG$, with probability $\frac{1}{2}$, we draw $\bz \sim \Dtrain$ and set $\bx = \bz$ and $\bb = 1$. Otherwise, we draw $\bz \sim \Dtest$ and set $\bx = \bz$ and $\bb = 0$. Suppose that we obtain the randomized branching program $P$ by running $\tdsboost(\calA,\bar\calD_1, \Dtest, \eps,\delta)$ where $\calA$ is an $(\eps,\delta)$-TDS learner. Then we have:
    \[
        \Pr_{(\bx,\bb)\sim \calG, P}\Bigr[\bx \text{ reaches node }(i,T)\text{ in }P \text{ and } \lbl(i,T)\neq \bb\Bigr] \le \frac{\eps}{2}
    \]
\end{lemma}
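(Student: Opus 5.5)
We follow the martingale argument of \cite{long2005martingale}, treating the random walk of a single labeled point $(\bx,\bb)\sim\calG$ through the randomized branching program $P$. By symmetry it suffices to handle $\bb=1$, showing $\Pr_{\bx\sim\Dtrain,P}[\bx \text{ reaches some }(i,T)\text{ with } i<T/2]\le \eps/2$; the case $\bb=0$, with nodes $i\ge T/2$, is identical with the roles of the labels swapped. The error in the statement is then the average of the two conditional errors, so it is at most $\eps/2$.

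\textbf{Step 1: biased steps at every internal node.} Fix an internal node $(i,t)$, i.e.\ one where $\getweakdist$ was invoked. First, by \Cref{lemma:single-sample-advantage} and a union bound over all $O(T^2)$ calls with failure probability $\delta'$, every $\weakdist_{i,t}^\calA$ has advantage $\Omega(1/m_\calA)$ between $\calG^1_{i,t}$ and $\calG^0_{i,t}$. This requires that $\calA$ be rejecting on the conditioned pair, which follows from $\hat a_{i,t}<\amin$ together with the accuracy of the $(0.01,\delta')$-estimate. Next, we use the \textsc{Balance} analysis of \cite{long2005martingale}, fed with the $(\gamma_2,\delta')$-estimate $\hat q_{i,t}$. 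It converts a weak distinguisher whose overall bias on the mixture $\calG_{i,t}$ may be arbitrary into a randomized $\hat\bw_{i,t}^\calA$ with the following property: a point drawn from $\calG^1_{i,t}$ moves right ($\hat\bw=1$) with probability at least $\tfrac12+\gamma'$, and a point drawn from $\calG^0_{i,t}$ moves right with probability at most $\tfrac12-\gamma'$. Here $\gamma'=\Omega(1/m_\calA)$, provided the constant $C$ in $\gamma_2$ is large enough that the estimation error of $\hat q$ is absorbed into the advantage.

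\textbf{Step 2: martingale concentration.} Consider a point $\bx\sim\Dtrain$. Let $Y_s$ be its right-step indicators for as long as it sits at internal nodes. Once it reaches a leaf before level $T$, it stops and can never reach level $T$; such a stopped path contributes nothing to the bad event. Conditioned on the history (the current node), the node's conditional distribution is exactly $\calG^1_{i,t}$, so $\E[Y_s\mid \text{history}]\ge \tfrac12+\gamma'$. Then $\sum_s (Y_s-\tfrac12-\gamma')$ is a submartingale with bounded increments. By Azuma--Hoeffding, the probability that after $T-1$ steps the index satisfies $i<T/2$, i.e.\ $\sum Y_s < T/2-1$, is at most $\exp(-\Omega(\gamma'^2 T))$. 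With $T=Cm_\calA^2\log(1/\eps)$ and $C$ large this is at most $\eps/4$.

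\textbf{Step 3: bookkeeping.} We add the failure events: the estimates being inaccurate or a weak distinguisher failing. These are absorbed into the remaining $\eps/4$, or into the overall $\delta$, by the choice of $\delta'=\delta/(CT^{10})$.

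The main obstacle is Step 1, namely making the \textsc{Balance} guarantee hold at every node despite the estimated $\hat q$ and the mixture reweighting. The fix is to verify directly, from the formula $\hat w = w\xi+(1-b)(1-\xi)$, that the difference between the right-step probabilities under $\calG^1$ and $\calG^0$ is $\Omega(\gamma)$ and is centered around $\tfrac12$ up to an error $O(\gamma_2)$. Once this is done, the argument of \cite{long2005martingale} goes through unchanged.
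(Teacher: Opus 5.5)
Your plan matches the paper's, which gives no argument of its own: it defers to Theorem~3 of \cite{long2005martingale} and remarks that the extra leaves at levels $t<T$ are harmless. Your Steps~1 and~3 correctly supply the node-level ingredients (see the caveat at the end). The gap is in Step~2.

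You infer that $\sum_s(Y_s-\frac12-\gamma')$ is a submartingale from the fact that, conditioned on the current node $(i,t)$, $\bx\sim\calG^1_{i,t}$. But Azuma--Hoeffding needs the drift conditioned on the whole past $Y_1,\dots,Y_{s-1}$. The current node $(1+\sum_{r<s}Y_r,\,s)$ carries strictly less information than that past. Because $P$ is a branching program, different paths merge at $(i,t)$, so $\calG^1_{i,t}$ is a mixture over these paths. The weak distinguisher $\weakdist_{i,t}^{\calA}$, and hence $\hat\bw_{i,t}^{\calA}$, is only guaranteed to be biased on that mixture. For example, training points entering $(2,3)$ via right-then-left may step right with probability well below $\frac12$, compensated by those entering via left-then-right. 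So $\E[Y_s\mid Y_1,\dots,Y_{s-1}]\ge\frac12+\gamma'$ is false in general. Nor can you use node-only conditioning instead: it is not a filtration, since the level-$s$ node does not determine the level-$(s-1)$ node, so no martingale inequality applies to it. This merging of paths is exactly what separates the martingale-boosting analysis from that of a boosting tree.

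The missing idea is that the bad event depends only on the \emph{marginal} law of the level-$T$ index, and that marginal can be controlled using only node-conditional drifts. Let $S_t$ be the number of right steps taken before level $t$. For points that stop at a leaf with $t<T$, extend the walk by always stepping right; the actual bad event is then contained in $\{S_T<T/2-1\}$ for the extended walk. Set $p:=\frac12+\gamma'$. Then $\Pr[S_{t+1}=k\mid S_t=k-1]\ge p$: Step~1 gives this at internal nodes, and the extension gives it trivially at leaves. Hence
\[
\Pr[S_{t+1}\ge k]=\Pr[S_t\ge k]+\Pr[S_t=k-1]\,\Pr[S_{t+1}=k\mid S_t=k-1]\ \ge\ (1-p)\,\Pr[S_t\ge k]+p\,\Pr[S_t\ge k-1].
\]
By induction, $S_t$ stochastically dominates $\mathrm{Bin}(t-1,p)$, and Hoeffding for the binomial gives your $\exp(-\Omega(\gamma'^2T))$. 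Equivalently, the marginals of $(S_t)_t$ coincide with those of the Markov chain with the same one-step transition probabilities, and for that chain your Azuma argument is valid. The test side is symmetric, with left steps.

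A caveat on the \textsc{Balance} verification you propose in Step~1. Write $\rho=\Pr_{\bx\sim\calG_{i,t}}[\weakdist_{i,t}^{\calA}(\bx)=1]$. The verification only goes through with the mixing probability of \cite{long2005martingale}, $\Pr[\xi=1]=\frac{1}{2\max(q,1-q)}$. That choice makes the right-step probability under $\calG_{i,t}$ equal to $\frac12+\frac{\rho-q}{2\max(q,1-q)}\in[\frac12-\gamma_2,\frac12+\gamma_2]$. With the value $\frac{1}{1+2|b-q|}$ printed in \Cref{algorithm:balancing}, it fails. For instance, $\rho\approx q=0.9$ gives $b=1$ and $\Pr[\xi=1]=\frac56$, so points from both $\calG^1_{i,t}$ and $\calG^0_{i,t}$ step right with probability about $\frac34$, and test points would drift to label-$1$ leaves.
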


The main difference between \Cref{algorithm:boosting} and the Martingale boosting algorithm of \cite{long2005martingale} is that our algorithm creates some additional leaf nodes at levels $t<T$. However, this does not change the analysis of \Cref{lemma:martingale-boost}, as the routing functions of the internal nodes are created based on valid weak distinguishers given by \Cref{alg:weak-distinguisher}.

\paragraph{Sketch of the Proof} As we have mentioned, \Cref{algorithm:boosting} creates a branching program $P$ that partitions the domain $\calX$ into regions $\{\calR_N\}_{N\in \leaves(P)}$ corresponding to the leaves $\leaves(P)$ of $P$. For each such region $\calR_N$, we either abstain if $\lbl(N)=0$ or predict according to hypothesis $h_N$ if $\lbl(N)=1$. There are three types of leaf nodes:
\begin{enumerate}
    \item The first type $\calT_1\subseteq\leaves(P)$ includes nodes $N$ at level $t<T$ that are rarely visited by $\calD_b$ for some $b\in\{0,1\}$ and $\lbl(N) = 1-b$. If $b=1$, then the training distribution assigns negligible mass to $\calR_N$ and hence abstaining on this region does not increase the rejection rate significantly. If $b=0$, then the test distribution assigns negligible mass to $\calR_N$ and hence even if we make wrong predictions, the error rate is not increased significantly.
    \item The second type $\calT_2\subseteq\leaves(P)$ includes the nodes at level $t=T$. Due to \Cref{lemma:martingale-boost}, these nodes have a similar property, i.e., $\Pr_{\bx\sim\calD_b,P}[\bx \text{ reaches }\calT_2\text{ at node with label } 1-b]\le O(\eps)$.
    \item Finally, the third type $\calT_3\subseteq\leaves(P)$ are those nodes $N$ where the TDS learner accepts with probability at least $0.9$, which enables us to learn a hypothesis with low test error conditioned on reaching node $N$, by amplifying the acceptance probability through the majority vote of several repetitions of the TDS learner (based on Proposition C.1 of \cite{klivans2024testable}).
\end{enumerate}

Following this sketch, we now provide a formal proof for \Cref{theorem:pq-via-tds-main}.

\begin{proof} of \Cref{theorem:pq-via-tds-main}. 
We assume that we are in the event that happens with probability at least $1-\delta$ such that the following conditions hold:
\begin{itemize}
    \item\label{item:first-type-leaves} Our estimates $\hat p_{i,t}^b, \hat a_{i,t}, \hat q_{i,t}$ are within $\gamma_1,0.01,\gamma_2$ of their expected values respectively.
    \item\label{item:second-type-leaves} Any time we call \Cref{alg:weak-distinguisher}, it outputs a valid weak distinguisher.
    \item\label{item:third-type-leaves} The majority boosted TDS learner never fails when called, i.e., it outputs $\acc$ and gives a hypothesis $h_{i,t}$ with error at most $4 \eps$ on $\calG^0_{i,t}$.
\end{itemize}
For the first property, we may use standard Hoeffding bounds to show that our estimates --- which we obtain through averaging --- are concentrated around their expectations. To show that the last property holds with high probability, we combine Proposition C.1 of \cite{klivans2024testable} with the observation that majority boosting amplifies the probability of acceptance as long as the probability of acceptance is larger than $1/2$.

We will first bound the rejection rate on the training distribution. We have the following:
\begin{align*}
    \Pr_{\bx\sim \Dtrain, P}[\hat g(\bx) = 0] &\le \Pr_{\bx\sim \Dtrain, P}[\exists N\in \leaves(P) : \bx \text{ reaches }N \wedge \lbl(N) = 0] \\
    &\le (\pmin+\gamma_1) T^2 + \eps \tag{first and second type leaves}\\
    &\le O(\eps)
\end{align*} 
To obtain the above bound we used the property that each leaf $N\in\calT_1$ with $\lbl(i,t) = 0$ satisfies $\Pr_{\bx\sim \Dtrain,P}[\bx \text{ reaches }N] \le \pmin+\gamma_1$, as well as \Cref{lemma:martingale-boost} for leaves of the form $(i,T)$.

For the error rate we obtain similar bounds for the contributions of the first and second types of leaf nodes, but we also need to account for the leaves of third type. Fortunately, any such node $N$ corresponds to hypothesis with low error on the test distribution conditioned on reaching node $N$. We have the following:
\begin{align*}
    \Pr_{\bx\sim \Dtest, P}&[\hat h(\bx)\neq f(\bx) \wedge \hat g(\bx) = 1] \\
    &\le \Pr_{\bx\sim \Dtest, P}[\exists N\in \leaves(P) : \bx \text{ reaches }N \wedge \lbl(N) = 1 \wedge h_N(\bx)\neq f(\bx)] \\
    &\le (\pmin+\gamma_1) T^2 + \eps + 4\eps \tag{first, second and third type leaves}\\
    &\le O(\eps)
\end{align*}

Finally, the bound on the run time follows from the fact that the boosting procedure runs in time $\poly(1/\gamma, \frac{1}{\eps}, \log(1/\delta), m_\calA) \cdot T(\calC, \eps)$, where $\gamma = \Omega ( {1}/{m_\calA} )$ is the advantage of the weak learner from \Cref{lemma:single-sample-advantage}. 
\end{proof}

\section{Learning Halfspaces with Distribution Shift with Queries}\label{section:halfspaces}
We note that a simple corollary of \Cref{theorem:pq-via-tds-main} is that it is unlikely that there is a polynomial time algorithm in $n$ and $\eps^{-1}$ to TDS learn halfspaces in the distribution-free setting. Indeed, combining \Cref{theorem:pq-via-tds-main} with \cite{kalai2021efficient}, which proves that PQ learning and reliable learning are equivalent, implies that TDS learning of halfspaces to accuracy $\eps$ is at least as hard as learning an intersection of $\Omega(1/\eps)$ halfspaces in the distribution-free PAC model.
Notably, whether there exists a polynomial-time algorithm for learning even an intersection of even two halfspaces is a long standing open question. 
Moreover, under suitable cryptographic assumptions \citep{klivans2009cryptographic, tiegel2024improved}, learning an intersection of polynomially many halfspaces requires sub-exponential time.
Even for halfspaces with a margin, it is unlikely there is a polynomial time algorithm for TDS learning. In particular, if we consider learning conjunctions over $\zo^n$, then every point has an inverse polynomial margin. By \Cref{theorem:pq-via-tds-main} and \Cref{remark:beyond-distribution-free}, it then follows that we can reduce TDS learning of conjunctions over $\{0,1\}^n$ to PQ learning of conjunctions. In turn, \cite{kalai2021efficient} showed PQ learning conjunctions is as hard as PAC learning DNFs, another long standing open problem in learning theory believed to not have polynomial time algorithms.

All that said, in this section, we will show that if we allow ourselves membership queries, then we can PQ and thus TDS learn halfspaces in polynomial time.
In particular, we define a PQ learner with membership queries as follows:

\begin{definition}[PQ Learning with Membership Queries]
    Let $\calC$ be a concept class over $\calX$. An PQ learning algorithm with membership queries, $\calA$, for the class $\calC$ up to error $\accuracy$ and probability of failure $\delta$ operates as follows. First, the algorithm is given labeled sample access to a training distribution $\ldtrain$ labeled by some unknown $f\in\calC$ and can query arbitrary points $x \in \calX$ to receive the value $f(x)$. Afterward, the algorithm is given unlabeled samples access to $\dtest$ and must (without making any further queries) output, with probability at least $1-\delta$, a hypothesis $h:\calX\to \{\pm 1\}$ and a selector $g:\calX\to \{0,1\}$ such that:
    \begin{enumerate}[label=\textnormal{(}\alph*\textnormal{)},leftmargin=6ex]
    \item (\textit{accuracy}) The test error is bounded as $\Pr_{\bx\sim \dtest}[f(\bx) \neq h(\bx)\text{ and }g(\bx) = 1] \le \accuracy$.
    \item (\textit{rejection rate}) The probability of rejection is bounded as $\Pr_{\bx\sim \dtrain}[g(\bx) = 0] \le \eps$.
    \end{enumerate}
\end{definition}

In this setting, we will then prove
\begin{theorem}
\label[theorem]{thm:MQ-halfspaces}
    There exists a PQ learning algorithm for halfspaces over $\mathbb{R}^n$ with membership queries that runs in time $\poly(n, \frac{1}{\eps}, \log(1/\delta))$.
\end{theorem}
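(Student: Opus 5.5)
We follow the outline from the technical overview. First we reduce to homogeneous halfspaces on the sphere. A general halfspace $\sgn(w\cdot x + \theta)$ over $\R^n$ becomes homogeneous over $\R^{n+1}$ via $x\mapsto (x,1)$. Since $\sgn(w\cdot x)$ is invariant under positive scaling of $x$, we may normalize every point to $x/\|x\|$ and assume $\dtrain,\dtest$ are supported on $\mathbb{S}^{n-1}$. Points with $x=0$ (after the lift this cannot happen) or exactly on the boundary are measure-theoretic corner cases that we can handle by querying $f$ at the corresponding point. Membership queries on transformed points are simulated by mapping back: a query $z$ in transformed space is answered by $f(A^{-1}z)$ up to positive scaling. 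This is legitimate because queries may be made at arbitrary points, and every transform we use is linear and invertible on the current subspace.

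The core subroutine is a high-margin PQ learner. Suppose the current point set, after a linear map $A$, has at least an $\alpha$ fraction of points with $|w'\cdot Ax|\ge\gamma$, where $w'$ is the unit normal of the transformed halfspace. We estimate $w'$ to accuracy $\gamma/3$ using queries on Gaussian points: $\E_{\bz\sim\calN(0,I)}[f'(\bz)\bz]=\sqrt{2/\pi}\,w'$. About $O(n\log(1/\delta)/\gamma^2)$ queries suffice by vector concentration. We then set the selector to $g(x)=\mathbb{I}\{|\hat w\cdot Ax|/\|Ax\|\ge 2\gamma/3\}$ and the hypothesis to $h=\sgn(\hat w\cdot Ax)$. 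The observation in the overview shows that $h$ is correct on every accepted point, whether that point comes from train or test. This is the crucial property: the selector never errs, so errors on $\dtest$ are zero regardless of shift, and only the rejection rate on $\dtrain$ has to be controlled.

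To guarantee margin we use the Forster transform. We draw a training sample $S$ of size $\poly(n,1/\eps)$ and compute, in polynomial time, an approximate radial-isotropic position $A$ for the points of $S$ in a subspace where no large fraction lies in a proper subspace. For such points $\frac1{|S|}\sum (Ax)(Ax)^\top/\|Ax\|^2\approx I/n$, so $\frac1{|S|}\sum (w'\cdot Ax)^2/\|Ax\|^2\approx 1/n$ and an $\Omega(1/n)$ fraction has margin $\gamma=\Omega(1/\sqrt n)$. If instead a large fraction of the points lies in a proper subspace $V$, we handle that case directly: we recurse inside $V$, where the halfspace restricted to $V$ is again a halfspace in lower dimension. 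The standard Forster/Dunagan–Vempala-style algorithms give exactly this dichotomy.

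We then iterate. We apply the high-margin learner, remove the points it accepts, and recurse on the remaining ones, drawing fresh training samples and conditioning on rejection by all previous selectors. Each round removes an $\Omega(1/n)$ fraction of the remaining training mass, so after $O(n\log(1/\eps))$ rounds at most $\eps$ of the mass remains. The final selector accepts if any stage accepts, and the prediction is taken from the first accepting stage; every such prediction is correct. The main obstacle is generalization of the rejection rate. The Forster transform is computed on an empirical sample, so we need the fraction of fresh $\dtrain$ mass accepted at each stage to match the empirical fraction. Each stage's acceptance region is a composition of a bounded number of halfspace-like margin conditions (VC dimension $\poly(n)$ per stage), so a uniform convergence bound over $O(n\log(1/\eps))$ stages with $\poly(n,1/\eps,\log(1/\delta))$ samples should suffice. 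We also need to handle the degenerate subspace case robustly, using an approximate Forster transform with tolerance that still yields $\gamma=1/\poly(n)$.
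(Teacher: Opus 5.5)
Your plan matches the paper's proof. You lift to a homogeneous halfspace and estimate the transformed normal from Gaussian membership queries (since $\E[f'(\bz)\bz]\propto w'$), which gives a high-margin learner whose selector never errs. You use the Forster transform with the proper-subspace fallback, so each round accepts at least a $\frac{\dim V}{n}\cdot\frac{1}{4\dim V}=\frac{1}{4n}$ fraction of the remaining training points. You then peel for $O(n\log(1/\eps))$ rounds and control the rejection rate by a VC argument. The paper reuses one sample and bounds the VC dimension of the whole decision-list selector; you draw fresh conditioned samples each round instead. That is a harmless variation: each round then only needs uniform convergence for one stage's acceptance class (subspace indicator, halfspace, degree-2 PTF; VC dimension $O(n^2)$), at the price of an $O(1/\eps)$ rejection-sampling overhead.

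The step that is not justified as written is the query simulation. The lift $x\mapsto(x,1)$ is affine, so it is not true that every map you use is linear and that queries can be answered ``up to positive scaling.'' A query of the homogeneous learner at $u=(y,c)\in\R^{n+1}$ has to be answered by $\sgn(c)\,f(y/c)$. This needs a sign flip when $c<0$. When $c=0$ the point is not a rescaled lift of any $x\in\R^n$, so the query cannot be forwarded to $f$ at all. The missing observation is what the paper packages as a ``respectful'' learner. Your margin learner only queries points $A^{-1}\bz$ with $\bz$ Gaussian on the current subspace $V$. $V$ contains lifted training points, whose last coordinate is positive, so the last-coordinate functional is not identically zero on $V$ and the queries have $c\neq 0$ almost surely.

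Separately, make explicit that each stage's selector includes the indicator $\mathbb{I}\{x\in V\}$. Otherwise a test point outside $V$ whose projection has large margin would be accepted with the label $\sgn(w\cdot P_Vx)$, which can differ from $f(x)$. With these two points filled in, your argument goes through.
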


\begin{remark}
Our upper bound for learning halfspaces with membership queries holds even under the stronger \emph{reliable and probably useful} (RPU) model. Unlike PQ learning, an RPU learner is only given (labeled) training examples, and outputs a selector $g$ and a classifier $h$ such that: whenever a point $x$ is selected ($g(x) = 1$), the prediction is always correct, i.e., $h(x) = f^*(x)$.

At first glance, this may appear to contradict prior exponential lower bounds for RPU learning with membership queries \cite{rpu}. However, those results assume that queries are restricted to a bounded-radius ball, whereas our algorithm allows arbitrary queries in $\mathbb{R}^n$.
\end{remark}

The rest of this section presents the proof of \Cref{thm:MQ-halfspaces}. To prove the statement, we first show that we can essentially reduce to homogeneous halfspaces. Afterwards, we give a PQ learning algorithm for homogenous halfspaces with a margin. Finally, we combine our algorithm for PQ learning high-margin halfspaces with the Forster transform to prove \Cref{thm:forster}.

\subsection{A Reduction to Homogeneous Halfspaces}
\label{sec:apx-halfspaces-reduction}
In this section, we show that PQ learning halfspaces with a margin reduces to a restricted class of query algorithms for homogeneous halfspaces. In particular, we define

\begin{definition}[Respectful Algorithm]
    Suppose that we are given a query algorithm $\calA$ to PQ learn a halfspace $f(x) := \sgn(w \cdot x)$ in $\mathbb{R}^{n}$. We say that $\calA$ is respectful if whenever $\dtrain$ is supported on $\{(x,1): x \in \mathbb{R}^{n-1}\}$, we have that $\calA$ almost surely never queries a point of the form $(x,0)$ for some $x \in \R^{n-1}$.
\end{definition}

We then have that
\begin{lemma}
\label[lemma]{lem:respectful}
Suppose that there exists a respectful PQ learning algorithm for homogeneous halfspaces over $\mathbb{R}^n$ with membership queries that runs in time $\poly(n, \frac{1}{\eps}, \log(1/\delta))$, then \Cref{thm:MQ-halfspaces} holds.\footnote{We quickly remark that while we will work in the real model of computation for simplicity, we note that if there is a respectful homogenous halfspace PQ learning algorithm that only queries points with polynomial bit complexity then the algorithm obtained for arbitrary halfspaces via the reduction will also only query points of polynomial bit complexity.}
\end{lemma}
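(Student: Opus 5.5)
The plan is the standard lifting trick: embed $\mathbb{R}^n$ into the affine hyperplane $\{(x,1): x\in\mathbb{R}^n\}\subseteq\mathbb{R}^{n+1}$. A general halfspace $f(x)=\sgn(w\cdot x - \theta)$ on $\mathbb{R}^n$ becomes the homogeneous halfspace $F(y)=\sgn(\tilde w\cdot y)$ on $\mathbb{R}^{n+1}$, where $\tilde w=(w,-\theta)$, and $F(x,1)=f(x)$ for every $x$. Given a respectful homogeneous PQ learner $\calB$ over $\mathbb{R}^{n+1}$, I would build a learner $\calA$ for general halfspaces over $\mathbb{R}^n$. It pushes $\ldtrain$ forward along $x\mapsto (x,1)$, keeping the labels, so that the lifted training distribution is realizable by $F$ and supported on $\{(x,1)\}$. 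It likewise pushes $\dtest$ forward along $x\mapsto(x,1)$ and runs $\calB$ on the lifted samples with parameters $(\eps,\delta)$.

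The first key step is simulating $\calB$'s membership queries. $\calB$ queries some point $y=(z,s)\in\mathbb{R}^{n+1}$. Because $\calB$ is respectful and the lifted training distribution lies on $\{(x,1)\}$, almost surely $s\neq 0$. For $s>0$ we have $F(z,s)=\sgn(s(w\cdot z/s-\theta))=f(z/s)$, so $\calA$ answers by querying $f$ at $z/s$. For $s<0$, homogeneity gives $F(z,s)=-F(-z,-s)=-f(z/s)$, so $\calA$ queries $z/s$ and negates the answer. This is exactly where respectfulness is used: points with $s=0$ correspond to "points at infinity" that $\calA$ cannot query. The queries all happen before any test samples are seen, matching the query-then-test protocol.

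The second step is transferring the guarantees. $\calB$ outputs $H,G$ on $\mathbb{R}^{n+1}$, and $\calA$ outputs $h(x)=H(x,1)$ and $g(x)=G(x,1)$. Since the lift is a bijection onto its image and labels are preserved, $\Pr_{\bx\sim\dtest}[f(\bx)\neq h(\bx), g(\bx)=1]$ equals the corresponding lifted quantity, which is at most $\eps$. The training rejection rate transfers the same way. The running time is $\poly(n+1,1/\eps,\log(1/\delta))$ plus $O(n)$ overhead per sample and query.

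Two details need checking, neither a real obstacle. The sign convention at the boundary (where $\sgn(0)$ is defined) should be consistent between $f$ and $F$; it is, because $F(x,1)=f(x)$ literally. The footnote's bit-complexity remark requires $z/s$ to have polynomial bit complexity whenever $(z,s)$ does, which holds. The subtle point is the "almost surely" in respectfulness: queries with $s=0$ occur on a null event, so I would simply fold that null event into the failure probability $\delta$ (or answer arbitrarily there) and note that it does not affect the guarantee.
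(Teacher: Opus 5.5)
Your proposal is correct and follows the paper's proof. You lift via $x\mapsto(x,1)$, answer a query $(z,s)$ with $\sgn(s)\,f(z/s)$ (the paper's $f(x/c)\cdot\sgn(c)$), use respectfulness to make $s=0$ a null failure event, and pull the output back along the lift.

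One small caveat, which the paper's argument shares: since $\sgn(0)\in\{\pm1\}$, the identity $F(z,s)=-F(-z,-s)$ fails for queries with $s<0$ lying exactly on the hyperplane $\tilde w\cdot(z,s)=0$. Your check $F(x,1)=f(x)$ therefore settles the boundary convention for the lifted samples but not for the simulated queries; this is harmless whenever the homogeneous learner's queries avoid that hyperplane almost surely.
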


\begin{proof}
    Let $\calA$ denote our respectful algorithm for homogeneous halfspaces and suppose that we are tasked with learning an unknown halfspace $f(x) = \sgn(w \cdot x - \theta)$ with distributions $\dtrain$ and $\dtest$. Let $\Phi$ be the map that sends a point $x \in \R^n$ to $(1,x) \in \R^{n+1}$ and denote by $\Phi(\dtrain)$ and $\Phi(\dtest)$ denote the distributions resulting from drawing from $\dtrain$ and $\dtest$ respectively and then applying $\Phi$.
    Note that after applying this transformation, the target halfspace can be written as $\sgn((w, -\theta) \cdot \Phi(x))$ and thus correspond to a homogeneous halfspace after our transformation.

    We will now run $\calA$ on $\Phi(\dtrain)$ and $\Phi(\dtest)$. To do so, we must argue that we can simulate membership queries in this new space. Indeed, suppose that $\calA$ queries a point $(x,c)$ for some $c \in \R$ and $x \in \R^n$. If $c = 0$, we simply output FAIL. Otherwise, we can output $f(x/c) \cdot \sgn(c)$. Since
        \[f(x/c) \cdot \sgn(c) = \sign(w \cdot (x/c) - \theta) \cdot \sgn(c) = \sgn(w \cdot x - c \theta) = \sign((w, -\theta) \cdot (x,c)),\]
    this correctly answers the membership query.

    We now note that we fail with measure zero as $\calA$ is respectful, thus this process almost surely outputs a hypothesis $h$ and selector $g$. Using $h \circ \Phi$ and $g \circ \Phi$, then yields the hypothesis and selector for PQ learning $f$.
\end{proof}

\subsection{PQ Learning Homogeneous Halfspaces with a Margin}
\label{sec:apx-halfspaces-margin}

We now turn to showing how to PQ learn a homogenous halfspaces, i.e. those of the form $f(x) = \sgn(w \cdot x)$, with a margin (cf. \Cref{def:margin}). As is standard in this setting, we will assume without loss of generality that $\dtrain$ and $\dtest$ are supported on the unit sphere $\mathbb{S}^{n-1}$ and $\|w\|_2 = 1$.

\begin{algorithm}[h]
\setcounter{AlgoLine}{0}
\LinesNumbered
\caption{$\learnmargin(f,\gamma,\delta)$}\label{alg:pq-large-margin}
\KwIn{Query access to $f$, a margin parameter $\gamma \in (0,1)$, and failure parameter $\delta \in (0,1/3)$}
\KwOut{A hypothesis $h:\calX\to\{\pm 1\}$ and a selector $g:\calX\to\{0,1\}$}
\BlankLine

Let $\ell \gets \frac{2000 n}{\gamma^2}\log\!\left(\frac{n}{\delta}\right)$\;

Sample $\bx^{(1)},\dots,\bx^{(\ell)} \sim \calN\!\left(0,\frac{\pi}{2}I_n\right)$\;

Compute $\wh{w} \gets \frac{1}{\ell}\sum_{i=1}^{\ell} f(\bx^{(i)})\,\bx^{(i)}$\;

Set $h(x) \gets \sgn(\wh{w}\cdot x)$ and $g(x) \gets \mathbb{I}\!\left(\left|\wh{w}\cdot x\right|\ge \frac{2\gamma}{3}\right)$\;

\Return $(h,g)$\;
\end{algorithm}

Our main lemma in this section will then be

\begin{lemma}
\label[lemma]{lem:learn-margin-guarantee}
Let $f$ denote an unknown halfspace, them with probability at least $1-\delta$, we have that $\learnmargin(f, \gamma, \delta)$ outputs a hypothesis $h$ such that for all for all $x \in \mathbb{S}^{n-1}$ $(i)$ if $x$ has margin at least $\gamma$ with respect to $f$, then $g(x) = 1$ and $(ii)$ if $g(x) = 1$ then $h(x) = f(x)$ and $(iii)$ $\|\wh{w}\| \geq \frac{2}{3}$
\end{lemma}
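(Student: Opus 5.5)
Our plan is to show that $\wh{w}$ is a very accurate estimate of $w$ itself (with $\|w\|_2=1$), namely $\|\wh{w}-w\|_2\le \gamma/3$ with probability at least $1-\delta$, and then derive (i)--(iii) deterministically from this. The first step is to compute the expectation. For $\bx\sim\calN(0,\frac{\pi}{2}I_n)$ and $f(x)=\sgn(w\cdot x)$, rotational symmetry gives $\E[f(\bx)\bx]=\E[|w\cdot\bx|]\,w$. Since $w\cdot\bx\sim\calN(0,\pi/2)$, we have $\E|w\cdot \bx|=\sqrt{\pi/2}\cdot\sqrt{2/\pi}=1$. Hence $\E[f(\bx)\bx]=w$, which is exactly why the variance $\pi/2$ was chosen.

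Next we establish concentration. Each coordinate $f(\bx)\bx_j$ is subgaussian with constant variance proxy, because $|f(\bx)\bx_j|=|\bx_j|$. A Hoeffding/subgaussian bound for each coordinate, together with a union bound over the $n$ coordinates, shows that with $\ell=\frac{2000n}{\gamma^2}\log(n/\delta)$ samples every coordinate deviates from $w_j$ by at most $\frac{\gamma}{3\sqrt n}$ with probability $1-\delta$. This gives $\|\wh{w}-w\|_2\le\gamma/3$. This is the only probabilistic step, and the main care is in checking that the constants work out. If they do not, one could instead apply a vector Bernstein inequality, or use the fact that $f(\bx)\bx$ is a sum of the component along $w$ plus an independent Gaussian orthogonal component.

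We then conclude. For $x\in\mathbb{S}^{n-1}$, Cauchy–Schwarz gives $|\wh{w}\cdot x-w\cdot x|\le\gamma/3$.
\begin{itemize}
\item For (i): if $|w\cdot x|\ge\gamma$, then $|\wh{w}\cdot x|\ge 2\gamma/3$, so $g(x)=1$.
\item For (ii): if $|\wh{w}\cdot x|\ge 2\gamma/3$, then $w\cdot x$ lies within $\gamma/3$ of $\wh{w}\cdot x$, so it is nonzero and has the same sign. Hence $h(x)=f(x)$.
\item For (iii): $\|\wh{w}\|\ge\|w\|-\gamma/3\ge 1-1/3=2/3$, since $\gamma<1$.
\end{itemize}
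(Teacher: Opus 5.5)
Your proposal is correct and takes essentially the same route as the paper. You compute $\E[f(\bx)\bx]=w$ via rotational symmetry and $\E|z|=1$ for $z\sim\calN(0,\pi/2)$, apply coordinatewise subgaussian concentration with a union bound over the $n$ coordinates to get $\|\widehat{w}-w\|_2\le\gamma/3$ with probability $1-\delta$, and then derive (i)--(iii) deterministically via Cauchy--Schwarz and the triangle inequality (the constants in $\ell$ work out for a constant variance proxy such as the paper's $10$, so the vector-Bernstein fallback is unnecessary).
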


\begin{proof}
	We will show that the lemma follows from the following claim
    \begin{claim}
        \[\Pr \left[ \| \wh{w} - w \|_2 \geq \gamma/3 \right] \leq \delta \]    
    \end{claim}

    \begin{proof}
        By rotational symmetric, we can let $w = e_1$ without loss of generality. We then note that
        \[\E[f(\bx^{(i)}) \cdot \bx^{(i)}] = e_1 \]
    Indeed, by symmetry the expectation of of the $j$th coordinate is $0$ for all $j \not = 1$. We can then compute the first expectation of the first coordinate as 
        \[\E_{\bz \sim \calN(0,\frac{\pi}{2})} [|\bz|] = \sqrt{ \frac{2}{\pi}} \cdot \sqrt{ \frac{\pi}{2}} = 1 \]
    We now note that for all $j \in [n]$, $\left( f(\bx^{(i)}) \cdot \bx^{(i)} - e_i \right)_j$ is sub-gaussian with variance proxy at most $10$. Thus it follows that for any $j$, we have that 
        \[ \Pr[|\wh{w}_j - (e_i)_j| \geq \gamma/3\sqrt{n} ] \leq \exp(- \ell \gamma^2/2000n) \leq \frac{\delta}{n}  \]
    Thus, applying a union bound then yields that for all $j \in [n]$
        \[|\wh{w}_j - (e_i)_j| \leq \gamma/3\sqrt{n}\]
    with probability $1- \delta$. Notably, this implies that 
    \[\|\wh{w} - e_i\|_2 \leq \gamma/3\]
    with probability at least $1 - \delta$ as desired.
    \end{proof}

    We will now show that conditioned on $\|\wh{w} - w\|_2 \leq \gamma/3$, properties $(i)$, $(ii)$, and $(iii)$ hold. Indeed, the triangle inequality immediately implies $\wh{w} \geq 1 - \gamma/3 \geq \frac{2}{3}$. Next, for property $(i)$, note that 
        \[\left| w \cdot x - \wh{w} \cdot x \right| \leq \|w-\wh{w}\|_2 \leq \gamma/3 \]
    Thus it follows that if $|w \cdot x| \geq \gamma$ then $|\wh{w} \cdot x| \geq 2 \gamma / 3$ and thus $g(x) = 1$. Similarly, we can observe that if $|\wh{w} \cdot x| \geq 2 \gamma / 3$, we have that $\sgn(w \cdot x) = \sgn( \wh{w} \cdot x)$, yielding property $(ii)$.
\end{proof}

\subsection{Proof of \Cref{thm:MQ-halfspaces}}
\label{sec:apx-halfspaces-forster}
In this section, we give an algorithm to prove \Cref{thm:MQ-halfspaces}.  To do so, our algorithm \learnh{} will apply a Forster transform to ensure that a non-trivial fraction of points have a large margin. We can then run \learnmargin{} to classify these points. Afterwards, we can repeat these steps for the points in $\train$ that were not labeled.

\begin{algorithm}[h]
\setcounter{AlgoLine}{0}
\LinesNumbered
\caption{$\learnh(\train,\eps,\delta)$}\label{alg:pq-halfspace}
\KwIn{A set examples $\train$ drawn i.i.d. from $\dtrain$, a rejection parameter $\eps$, and a failure probability $\delta$}
\KwOut{A hypothesis $h:\calX\to\{\pm1\}$ and a selector $g:\calX\to\{0,1\}$}
\BlankLine

$R_0 \gets \train$, $i \gets 0$\;

Set $g_0(x)\equiv 0$ and $h_0(x)\equiv 1$\;
\BlankLine

\While{$\frac{\lvert R_i\rvert}{\lvert \train\rvert} \ge \frac{1}{2}\eps$ \textbf{and} $i < 4n\log\frac{2}{\eps}$}{
    $(A_i,V_i) \gets \forster\!\left(R_i,\frac{\delta}{24n\log(2/\eps)}\right)$\;
    
    \If{$R_i \cap V_i = \emptyset$}{
    \label{line:empty-subspace}
            \textbf{break}\;
        }
        
    $(\ell_i,s_i) \gets \learnmargin\!\left(f(P_{V_i}^T A_i^{-1}x),\frac{1}{2\sqrt{n}},\frac{\delta}{24n\log(2/\eps)}\right)$\;

    $R_{i+1} \gets R_i \setminus
    \left\{x\in R_i\cap V_i:
    s_i\!\left(\frac{A_i P_{V_i}x}{\lVert A_i P_{V_i}x\rVert}\right)=1
    \right\}$\;

    Define
    \[
      g_{i+1}(x) \gets g_i(x)\ \lor\
      \Big(\mathbbm{1}\{x\in V_i\}\ \land\ s_i\!\left(\frac{A_i P_{V_i}x}{\lVert A_i P_{V_i}x\rVert}\right)\Big)
      \quad\text{for all }x
    \]

    Define
    \[
      h_{i+1}(x) \gets
      \begin{cases}
        h_i(x) & \text{if } g_i(x)=1,\\[0.25em]
        \ell_i\!\left(\frac{A_i P_{V_i}x}{\lVert A_i P_{V_i}x\rVert}\right) & \text{ otherwise}
      \end{cases}
    \]

    $i \gets i+1$\;
}

\BlankLine
\Return{$(h_i,g_i)$}\;
\end{algorithm}

To analyze \learnh{}, we start by showing that we typically terminate when we have labeled all but an $\eps/2$ fraction of the data set.

\begin{lemma}
\label[lemma]{lem:train-labeled}
	With probability $1 - \delta/3$, when $\learnh(\train, \eps, \delta)$ terminates we have that $\frac{|R_i|}{|\train|} \leq \frac{\eps}{2}$.
\end{lemma}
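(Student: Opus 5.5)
We need to show that the while loop ends only once $|R_i|/|\train| \le \eps/2$. It cannot end at the break on Line~\ref{line:empty-subspace}, and it cannot run out the iteration cap $4n\log(2/\eps)$ first. The plan has three steps.

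\emph{Step 1: the Forster call.} We rely on the standard guarantee of $\forster$: with probability $1-\delta'$, where $\delta' = \delta/(24n\log(2/\eps))$, it returns a subspace $V_i$ and a map $A_i$. The subspace contains a large fraction of the current point set, say $|R_i\cap V_i| \ge \frac{\dim V_i}{n}|R_i| \ge |R_i|/n$; this is the usual fact that a point set not in radial isotropic position has a subspace holding more than its proportional share. On $R_i\cap V_i$, the normalized points $A_iP_{V_i}x/\|A_iP_{V_i}x\|$ are in radial isotropic position, so $\frac{1}{|R_i\cap V_i|}\sum (u\cdot y)^2 = 1/\dim V_i \ge 1/n$ for every unit $u$. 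In particular $R_i\cap V_i\neq\emptyset$ whenever $R_i\neq\emptyset$, so the break on Line~\ref{line:empty-subspace} never fires before termination. The loop condition $|R_i|\ge \eps|\train|/2 > 0$ guarantees $R_i \neq \emptyset$ while we are inside the loop.

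\emph{Step 2: margin in the transformed space.} The transformed target is $\sgn(u\cdot y)$ with $u$ the normalized image of $w$. Since each $(u\cdot y)^2 \le 1$ and the average is at least $1/n$, averaging gives
\[
\Pr_y\big[(u\cdot y)^2 \ge 1/(4n)\big] \ge \frac{1/n - 1/(4n)}{1} = \frac{3}{4n}.
\]
So at least a $\frac{3}{4n}$ fraction of $R_i\cap V_i$, and hence at least a $\frac{3}{4n^2}$ fraction of $R_i$, has margin $\ge 1/(2\sqrt n)$. By \Cref{lem:learn-margin-guarantee}(i), with probability $1-\delta'$ the selector $s_i$ accepts all these points, and they are removed from $R_i$.

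I expect Step 2 to be the main obstacle. The margin fraction $\frac{3}{4n^2}$ that I get from the $|R_i\cap V_i|\ge |R_i|/n$ bound yields only $O(n^2\log(1/\eps))$ iterations, which exceeds the cap of $4n\log(2/\eps)$. To match the cap, I need a shrink factor of $(1-\frac{1}{4n})$ per round. I would first try to use a version of the Forster guarantee with $|R_i\cap V_i|$ a constant fraction of $|R_i|$. Failing that, I would account for the removal more carefully across rounds: the points outside $V_i$ are untouched, but as rounds proceed, $\dim V_i$ relative to the remaining mass should improve, and a potential-function argument over $\dim V_i$ may recover the bound.

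\emph{Step 3: counting iterations.} Granting a shrink factor of $(1-\frac{1}{4n})$ per round, after $4n\log(2/\eps)$ rounds we have $|R_i| \le (1-\frac{1}{4n})^{4n\log(2/\eps)}|\train| \le \frac{\eps}{2}|\train|$. So whether the loop exits through its condition or through the cap, the fraction $|R_i|/|\train|$ is at most $\eps/2$. Finally, a union bound over the at most $4n\log(2/\eps)$ calls to $\forster$ and to $\learnmargin$, each failing with probability at most $\delta'$, gives total failure probability at most $2\cdot 4n\log(2/\eps)\cdot\delta' = \delta/3$.
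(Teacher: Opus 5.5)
Your Steps 1 and 3 are sound: the loop condition keeps $R_i\neq\emptyset$, so the break never fires, and the union bound over $2\cdot 4n\log(2/\eps)$ calls with failure $\delta/(24n\log(2/\eps))$ each gives exactly $\delta/3$. But Step 2 does not establish the per-round shrink factor $1-\frac{1}{4n}$ that Step 3 assumes. A factor of $1-\frac{3}{4n^2}$ over $4n\log(2/\eps)$ rounds only bounds $|R_i|/|\train|$ at the cap by about $(\eps/2)^{3/n}$, which is far above $\eps/2$. So there is a genuine gap, exactly where you flagged it.

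The missing idea is simple: do not replace $1/\dim V_i$ by $1/n$ before multiplying the two bounds. You need neither a constant-fraction Forster guarantee nor a potential argument. Write $d=\dim V_i$. The normalized points $A_iP_{V_i}x/\|A_iP_{V_i}x\|$, for $x\in R_i\cap V_i$, lie in a $d$-dimensional space in radial isotropic position. So the average of $(u\cdot y)^2$ is at least of order $1/d$, and your averaging step with threshold $\frac{1}{4d}$ shows that an $\Omega(1/d)$ fraction of $R_i\cap V_i$ has margin at least $\frac{1}{2\sqrt d}\ge\frac{1}{2\sqrt n}=\gamma$. This is \Cref{lem:forster-margin} applied in dimension $d$. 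Multiplying by $|R_i\cap V_i|/|R_i|\ge d/n$ from \Cref{cor:forster}, the dimension cancels. Hence an $\Omega(1/n)$ fraction of $R_i$ is removed each round, with constant at least $\frac14$.

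One correction on constants: \forster{} only guarantees $1/2$-approximate isotropy, so your ``$=1/\dim V_i$'' should read ``$\ge 1/(2\dim V_i)$''. That gives $\frac{1}{4d}\cdot\frac{d}{n}=\frac{1}{4n}$, exactly the rate matching the cap $4n\log(2/\eps)$.

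This cancellation is the paper's argument. The paper applies the anticoncentration bound to $\hat w/\|\hat w\|$ and uses $\|\hat w\|\ge 2/3$ to reach the selector threshold $\frac{1}{3\sqrt n}$ directly. You instead apply it to the true transformed normal and invoke part (i) of \Cref{lem:learn-margin-guarantee}; either route works once the $d$-cancellation is in place.
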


\begin{proof}
	We'll assume that no call to \forster{} or \learnmargin{} fails, which by a union bound happens with probability at least $1 - \delta/3$. Notably, in this case by \Cref{cor:forster}, we will never have $V_i \cap R_i = \emptyset$ and thus never execute Line \ref{line:empty-subspace}.
    
    Now applying \Cref{lem:forster-margin}, it then follows that  
		\[\Pr_{\bx \sim R_i \cap V_i} \left[ \left| \frac{\wh{w}}{\|\wh{w}\|} \cdot \frac{A_i P_{V_i} \bx}{\|A_i P_{V_i} \bx\|} \right| \geq \frac{1}{2 \sqrt{n}} \right] \geq \frac{1}{4\dim(V_i)} \]
	Since $\|\wh{w}\| \geq \frac{2}{3}$ by the promise of \Cref{lem:learn-margin-guarantee}, it follows that
		\[\Pr_{\bx \sim R_i \cap V_i} \left[ \left| \wh{w} \cdot \frac{A_i P_{V_i} x}{\|A_i P_{V_i} x\|} \right| \geq \frac{1}{3 \sqrt{n}} \right] \geq \frac{1}{4\dim(V_i)} \]
	Note that the points in $R_i \cap V_i$ satisfying the above event also satisfy $\mathbb{I}(x \in V_i) \land s_i \left( \frac{A_i P_{V_i} x}{\|A_i P_{V_i} x\|} \right)$. So, it follows that
		\[\frac{|R_{i+1}|}{|R_i|} \leq 1 - \frac{1}{4 \dim(V_i)} \cdot \frac{|R_i \cap V_i|}{|R_i|} \leq 1 - \frac{1}{4 \dim(V_i)} \cdot \frac{\dim(V_i)}{n} \frac{|R_i|}{|R_i|} = 1 - \frac{1}{4n}  \]
	The result now follows by noting that
		\[\frac{|R_i|}{|\train|} \leq \left(1 - \frac{1}{4n} \right)^{4n \log(2/\eps)} \leq \frac{\eps}{2} \]
\end{proof}

\begin{lemma}
\label[lemma]{lem:rej-prob}
Suppose that $\train$ is a set of size $\Omega \left (\frac{Tn^2 \log(nT) \log(1/\eps \delta)}{\eps^2} \right)$ samples from $\dtrain$, then with probability at least $1 - \delta/2$ the hypothesis $h$ outputted by $\learnh(\train, \eps, \delta)$ satisfies
\[ \Pr_{\bx \sim \dtrain}[g(\bx) = 0] \leq \eps\]	
\end{lemma}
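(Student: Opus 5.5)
This is a generalization argument: the rejection rate on the sample $\train$ is small by \Cref{lem:train-labeled}, and we transfer it to $\dtrain$ by uniform convergence over the class of selectors the algorithm can output. By \Cref{lem:train-labeled}, with probability $1-\delta/3$ the algorithm stops with $|R_i|/|\train| \le \eps/2$. Since the set of points of $\train$ with $g(x)=0$ is exactly $R_i$, this gives an empirical rejection rate of at most $\eps/2$.

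The obstacle is that $g$ depends on $\train$, because the Forster transforms $A_i, V_i$ are computed from $R_i \subseteq \train$. So we cannot apply a single Hoeffding bound to a fixed $g$. Instead I will bound the complexity of the family of all possible selectors. Every output $g$ is a disjunction of at most $T := 4n\log(2/\eps)$ sets of the form
\[
\{x \in V_i\} \cap \Big\{ \big|\wh{w}_i \cdot A_i P_{V_i} x\big| \ge \tfrac{2\gamma}{3}\,\|A_i P_{V_i} x\| \Big\}.
\]
Membership in the subspace $V_i$ is a fixed condition. Squaring, the margin condition becomes $(\wh{w}_i\cdot A_iP_{V_i}x)^2 - \tfrac{4\gamma^2}{9}\|A_iP_{V_i}x\|^2 \ge 0$, which is a degree-2 polynomial threshold in $x$, i.e., a halfspace in the $O(n^2)$-dimensional feature space of quadratic monomials. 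Subspace indicators have VC dimension $O(n)$, since they can be expressed through linear equalities and hence as intersections of halfspace-type constraints. Each building block is therefore an intersection of such sets and has VC dimension $O(n^2)$. A union of $T$ blocks has VC dimension $O(Tn^2\log T)$ by the standard composition bound for $k$-fold unions and intersections.

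The uniform convergence theorem for this class then gives, with $|\train| = \Omega\big(Tn^2\log(nT)\log(1/(\eps\delta))/\eps^2\big)$ samples and probability at least $1-\delta/6$,
\[
\Big|\Pr_{\bx\sim\dtrain}[g(\bx)=0] - \frac{|\{x\in\train : g(x)=0\}|}{|\train|}\Big| \le \frac{\eps}{2}
\]
simultaneously for every $g$ in the class. This sample size matches the stated bound. A union bound with the event from \Cref{lem:train-labeled} gives total failure probability at most $\delta/2$ and $\Pr_{\bx\sim\dtrain}[g(\bx)=0]\le \eps$.

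The step I expect to need the most care is showing that the randomness of \learnmargin{} does not break this argument. The vectors $\wh{w}_i$ are random, but they range over the parametrized family above, so uniform convergence over the whole class handles them. The Forster outputs are data-dependent but also lie in the class, so the union-of-polynomial-thresholds VC bound covers both.
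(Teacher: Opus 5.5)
Your proposal is correct and follows the paper's own proof essentially step for step. You bound the VC dimension of the family of possible selectors by $O(Tn^2\log(nT))$, each round contributing a subspace indicator intersected with a degree-$2$ threshold and at most $T$ rounds being combined. You then invoke uniform convergence to get an $\eps/2$ deviation with probability $1-\delta/6$, and union bound with \Cref{lem:train-labeled}, using that the points of $\train$ with $g(x)=0$ are exactly $R_i$. Squaring the two-sided margin condition and treating $g$ as a plain disjunction are slightly cleaner than the paper's halfspace-plus-PTF and decision-list phrasing.

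The one loose spot, which the paper shares, is the $O(n^2)$ bound per block: your ``intersection of halfspace-type constraints'' reasoning does not quite give it. It does follow, because on any finite sample $\mathbb{I}\{x\in V\}\wedge\{q(x)\ge 0\}$ agrees with the single quadratic threshold $q(x)-M\|P_{V^{\perp}}x\|^2\ge 0$ for $M$ large.
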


\begin{proof}
Let $T$ denote the number of iterations of the while loop. We will argue that $g_1, \dots, g_T$ come from a class of VC dimension $O(T n^2 \log(nT))$. Towards this, we first show that the function $\mathbb{I}(x \in V_i) \land s_i \left( \frac{A_ix}{\|A_ix\|} \right)$ has VC dimension at most $O(n^2)$. Indeed, the indicator of a subspace has VC dimension at most $O(n)$. Using the definition of $s_i$ as appearing in \learnmargin, we have that
		\[s_i \left( \frac{A_i P_{V_i} x}{\|A_i P_{V_i} x\|} \right) = \mathbb{I} \left( \wh{w} \cdot \frac{A_i P_{V_i} x}{\|A_i P_{V_i} x\|} \geq \frac{1}{3 \sqrt{n}} \right) \]
	Notably, this can be rewritten as a conjunction of a degree $2$ PTF and a halfspace, namely
		\[\sgn \left( (\wh{w} \cdot A_iP_{V_i}x)^2 - \frac{1}{9n} \cdot \|A_iP_{V_i}x\|^2\right) \land \sgn \left( \wh{w} \cdot A_iP_{V_i} x \right).\]
	As such, it follows that $\mathbb{I}(x \in V_i) \land s_i \left( \frac{A_iP_{V_i}x}{\|A_iP_{V_i}x\|} \right)$ is the conjunction of a subspace, halfspace, and degree $2$ PTF. Using standard results, it thus follows that it has VC dimension $O(n^2)$ as desired. Since each $g_i$ is a decision list of length at most $T$ over features of the form $\mathbb{I}(x \in V_i) \land s_i \left( \frac{A_ix}{\|A_ix\|} \right)$, it follows from standard techniques that the functions $g_i$ for $i < T$ have VC dimension at most $O(Tn^2\log(nT))$ (see e.g. Theorem 3.6 of \cite{kearns1994introduction}). By \Cref{cor:VC-sampling}, it then follows that with probability $1 - \delta/6$
		\[\Pr_{\bx \sim \dtrain}[g_T(\bx) = 0] = \frac{|R_T|}{|\train|} + \frac{\eps}{2}.\]
    Thus, if $\Pr_{\bx \sim \dtrain}[g(\bx) = 0] > \eps$ either $\frac{|R_T|}{|\train|} > \frac{\eps}{2}$, which happens with probability at most $\delta/3$ by \Cref{lem:train-labeled}, or the VC inequality fails. So, by a union bound, the lemma statement holds with probability $1 - \delta/2$, as desired.
\end{proof}

With this we can complete the proof of \Cref{thm:MQ-halfspaces}. 

\begin{proof} of \Cref{thm:MQ-halfspaces}. 
We start by noting that the \learnh{} successfully PQ learns a homogeneous halfspace assuming $|\train|$ is sufficiently large so as to satisfy the hypothesis of \Cref{lem:rej-prob}. Indeed, with probability $1 - \delta/2$, the rejection probability of $g$ under $\dtrain$ is at most $\eps$ by \Cref{lem:rej-prob}. Moreover, so long as no call to \learnmargin{} fails, which happens with probability at least $1 - \delta/6$ by a union bound, we have that our algorithm has zero error on points with $g_i(x) = 1$ by \Cref{lem:learn-margin-guarantee}. In particular, $\learnmargin{}$ outputs a hypothesis $\ell_j$ such that $\ell_j(x) = \sgn(P_{V_j} w \cdot A_j^{-1} x)$ for all $x \in \mathbb{S}^{n-1} \cap V_j$ such that $s_j(x) = 1$. As such we conclude that $\ell_j \left( \frac{A_j P_{V_j} x}{\|A_j P_{V_j} x\|} \right) = \sgn(w \cdot x)$ for all $x \in V_j$ satisfying $s_j \left( \frac{A_j P_{V_j} x}{\|A_j P_{V_j} x\|} \right) = 1$. Thus, it follows that with probability $1 - \delta/6$, we have that the hypothesis we output, $h_i$, makes no errors on points where $g_i(x) = 1$, as desired.

To complete the proof of the theorem, it now remains to show that \learnh{} is respectful. Towards this, note that all queries are made in our calls to \learnmargin{}. Notably, we query $f(P_{V_i}^T A_i^{-1} x)$ on points drawn from $\calN(0, \frac{\pi}{2} I_{V_i})$. Since $R_i \cap V_i \not = \emptyset$, it follows that there must exist a point $z \in V_i$ such that $z_n \not = 0$. Thus, querying $f$ on a point of the form $(x,0)$ is a measure zero event and \learnh{} is respectful as desired. Using \Cref{lem:respectful} then yields the result.
\end{proof}

\section*{Discussion}

We have shown that PQ and TDS learning are equivalent in the distribution-free setting, where learners are required to succeed for arbitrary training distributions. A natural open question is whether this equivalence extends to the distribution-specific setting, in which PQ or TDS learners are only required to succeed when the training distribution equals a fixed reference distribution $\calD^*$, while no assumptions are made on the test distribution.

Existing work in distribution-specific learning establishes positive results for both models, but does not always yield matching guarantees. For example, there exists an efficient TDS learner for $\mathsf{AC}^0$ circuits when $\calD^*$ is the uniform distribution on the Boolean hypercube, whereas no efficient PQ learner is currently known for this setting. Nevertheless, no formal separation between the two models has been established.

\begin{flushleft}
\bibliographystyle{alpha}
\bibliography{allrefs}
\end{flushleft}

\newpage
\appendix
\section{Useful Facts from Probability Theory}
\begin{lemma}[Reverse Markov Inequality]
Suppose that $\bx$ is a random variable satisfying $\bx \leq B$ and $\E[\bx] \geq 0$, then
	\[\Pr \left[ \bx \geq \frac{1}{2} \E[\bx] \right] \geq \frac{1}{2B} \E[\bx] \]	
\end{lemma}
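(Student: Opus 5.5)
The plan is to split the expectation according to whether the event $\{\bx \ge \frac12\E[\bx]\}$ occurs, and to bound each piece crudely. Write $\mu := \E[\bx] \ge 0$ and $p := \Pr[\bx \ge \mu/2]$.

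First I dispose of the degenerate case. If $\mu = 0$, the right-hand side is $0$ and there is nothing to prove. If $\mu > 0$, then $B \ge \bx$ almost surely gives $B \ge \mu > 0$, so dividing by $B$ later is legitimate.

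Next, decompose
\[
\mu \;=\; \E\bigl[\bx\,\mathbbm{1}\{\bx \ge \mu/2\}\bigr] \;+\; \E\bigl[\bx\,\mathbbm{1}\{\bx < \mu/2\}\bigr].
\]
For the first term I use $\bx \le B$ pointwise, which bounds it by $B\,p$. For the second term I use $\bx < \mu/2$ on the event. This bounds it by $\frac{\mu}{2}\Pr[\bx<\mu/2]$, which is at most $\mu/2$ because $\mu \ge 0$.

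A remark on negativity: $\bx$ may take negative values. This only makes the second term smaller, so the upper bound still holds and no lower bound on $\bx$ is needed.

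Combining the two bounds gives $\mu \le Bp + \mu/2$. Hence $Bp \ge \mu/2$, that is, $p \ge \mu/(2B)$, which is the claim. There is no real obstacle here. The only points needing care are the sign conditions: $\mu \ge 0$ is what lets $\frac{\mu}{2}\Pr[\bx<\mu/2] \le \frac{\mu}{2}$, and $B>0$ is what justifies the final division.
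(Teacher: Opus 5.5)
Your proof is correct and is exactly the paper's argument: the paper likewise bounds $\E[\bx] \le B\cdot\Pr[\bx \ge \frac{1}{2}\E[\bx]] + \frac{1}{2}\E[\bx]$ and then rearranges. Your separate treatment of the case $\E[\bx]=0$, and your observation that $B \ge \E[\bx] > 0$ otherwise, make the final division explicit where the paper leaves it implicit.
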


\begin{proof}
Note that	
\[\E[\bx] \leq \Pr \left[ \bx \geq \frac{1}{2} \E[\bx] \right] \cdot B + \frac{1}{2} \E[\bx].\]
Rearranging yields the result.
\end{proof}

We will also need the VC inequality.

\begin{theorem}[VC Inequality]
    Let $\calC$ denote a concept class over $\calX$, $\calD$ be a distribution over $\calX$, and $\eps \in (0,1)$. If $d$ is the VC dimension of $\calC$ and $\bS \sim \calD$ is a set of size $m$, then
        \[\Pr_{\bS} \left[ \sup_{c \in \calC} \left| \Pr_{\bx \sim \calD}[c(\bx) = 1] - \frac{1}{|\bS|} \sum_{\bs \in \bS} \mathbb{I}\{c(\bs) = 1\} \right| \geq \eps \right] \leq 8 \cdot \left( \frac{e m}{d} \right)^d \cdot e^{-m \eps^2/32} \]
\end{theorem}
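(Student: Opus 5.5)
We would prove this by the classical two-step symmetrization argument (as in Devroye, Györfi and Lugosi) and then combine it with the Sauer--Shelah lemma. We first dispose of degenerate ranges. If $m\varepsilon^2 \le 32\ln 8$ and $m\ge d$, then $8(em/d)^d e^{-m\varepsilon^2/32}\ge 1$, so the bound is trivial. The case $m<d$ is handled the same way, interpreting the bound through the growth function $\Pi_\calC(m)\le 2^m$. So we may assume $m\varepsilon^2$ is larger than an absolute constant. The goal is the intermediate bound
\[\Pr_{\bS}\Big[\sup_{c\in\calC}|\Pr_\calD[c=1]-\tfrac1m\textstyle\sum_{\bs\in\bS}\mathbb{I}\{c(\bs)=1\}|\ge\varepsilon\Big]\le 8\,\Pi_\calC(m)\,e^{-m\varepsilon^2/32},\]
after which Sauer--Shelah gives $\Pi_\calC(m)\le (em/d)^d$ for $m\ge d$, which is exactly the stated inequality.

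\emph{Step 1 (ghost sample).} Draw an independent copy $\bS'$ of size $m$. On the event that some $c$ has $|\Pr_\calD[c=1]-\hat P_{\bS}(c)|\ge\varepsilon$, fix such a $c$ as a measurable function of $\bS$. By Chebyshev, $|\hat P_{\bS'}(c)-\Pr_\calD[c=1]|<\varepsilon/2$ with probability at least $1/2$ once $m\varepsilon^2\ge 2$. This yields
\[\Pr[\sup_c|\Pr_\calD[c=1]-\hat P_\bS(c)|\ge\varepsilon]\le 2\Pr[\sup_c|\hat P_\bS(c)-\hat P_{\bS'}(c)|\ge\varepsilon/2].\]

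\emph{Step 2 (Rademacher signs).} The joint law of $(\bS,\bS')$ is invariant under swapping coordinate pairs. Hence we may introduce i.i.d.\ random signs $\sigma_j$ into the difference $\frac1m\sum_j\sigma_j(\mathbb{I}\{c(\bs_j)=1\}-\mathbb{I}\{c(\bs'_j)=1\})$. By the triangle inequality, this is bounded by the two one-sample Rademacher averages, each exceeding $\varepsilon/4$. This gives a factor $2\cdot 2=4$ on top of Step 1 and leaves an event of the form $\sup_c|\frac1m\sum_j\sigma_j\mathbb{I}\{c(\bs_j)=1\}|\ge\varepsilon/4$ over a \emph{single} sample of size $m$.

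\emph{Step 3 (conditioning and Hoeffding).} Conditioning on $\bS$, the class induces at most $\Pi_\calC(m)$ distinct label patterns on $\bS$. For each fixed pattern, Hoeffding over the signs gives probability at most $2e^{-2m(\varepsilon/4)^2/4}=2e^{-m\varepsilon^2/32}$, since each summand has range $2$. A union bound over the patterns and the constants $2\cdot2\cdot2=8$ then give the intermediate bound.

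The main obstacle is Step 2. The naive route keeps the double sample and pays $\Pi_\calC(2m)$, which leads to $(2em/d)^d$ rather than $(em/d)^d$. To get the growth function at $m$ we must split the symmetrized difference into two one-sample Rademacher averages before conditioning, and track the constants so that the threshold becomes $\varepsilon/4$ and the exponent becomes exactly $m\varepsilon^2/32$. The measurability of the chosen $c$ in Step 1 is a routine technicality, which we would handle by assuming $\calC$ is countable or permissible.
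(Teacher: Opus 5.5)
The paper gives no proof of this inequality; it quotes it as a standard fact. Your Steps 1--3 are exactly the textbook argument (Devroye--Gy\"orfi--Lugosi) that produces the constants $8$ and $32$, and the constants check out:
\begin{itemize}
\item Chebyshev gives failure probability at most $1/(m\varepsilon^2)\le 1/2$.
\item Splitting the symmetrized difference into two one-sample sign averages costs a factor $2$ at threshold $\varepsilon/4$.
\item Hoeffding with range $2$ gives $2e^{-m\varepsilon^2/32}$ per label pattern.
\end{itemize}
Together these give $8\,\Pi_{\calC}(m)\,e^{-m\varepsilon^2/32}$. Your disposal of the range $m\varepsilon^2\le 32\ln 8$ (when $m\ge d$) correctly covers the requirement $m\varepsilon^2\ge 2$.

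The step that fails is your treatment of $m<d$. Sauer--Shelah bounds $\Pi_{\calC}(m)$ by $(em/d)^d$ only when $m\ge d$. For $m<d$ one has $\Pi_{\calC}(m)=2^m$ exactly, and $2^m>(em/d)^d$ once $m/d$ is small; indeed $(em/d)^d<1$ for $m<d/e$. So reinterpreting through the growth function cannot give the stated right-hand side.

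In fact the stated inequality is false in that range. Take the following instance:
\begin{itemize}
\item $\calX$ consists of $d\ge 5$ points, and $\calC$ is all $\{\pm1\}$-valued functions on $\calX$, so the VC dimension is $d$.
\item $\calD$ is uniform on $\calX$, with $m=1$ and $\varepsilon=1/2$.
\end{itemize}
The concept equal to $1$ only at the sampled point has empirical frequency $1$ and true probability $1/d$, a deviation of $1-1/d\ge 1/2$. So the left-hand side equals $1$, while the right-hand side is $8(e/d)^d e^{-1/128}<1$.

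The right repair is to add the hypothesis $m\ge d$ to the statement. That is standard for this form of the inequality, and it is the only regime the paper uses: its corollary takes $m=\Omega\bigl(d\log(1/\varepsilon\delta)/\varepsilon^2\bigr)\ge d$. You should not claim the $m<d$ case is covered.
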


\begin{corollary}
\label[corollary]{cor:VC-sampling}
    Let $\calC$ denote a concept class over $\calX$, $\calD$ be a distribution over $\calX$, and $\eps, \delta \in (0,1/3)$. If $d$ is the VC dimension of $\calC$ and $\bS \sim \calD$ is a set of size $\Omega \left( \frac{d \log(1/\eps \delta)}{\eps^2} \right)$, then with probability $1 - \delta$, 
        \[\left| \Pr_{\bx \sim \calD}[c(\bx) = 1] - \frac{1}{m} \sum_{s \in S} \mathbb{I}\{c(\bs) = 1\} \right| \leq \eps\]
    for all $c \in \calC$.
\end{corollary}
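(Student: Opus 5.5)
The plan is to derive \Cref{cor:VC-sampling} directly from the VC Inequality stated just above it, by choosing the sample size so that the right-hand side $8\left(\frac{em}{d}\right)^d e^{-m\eps^2/32}$ is at most $\delta$. The event bounded there is exactly the complement of the uniform-deviation conclusion, so once this holds the corollary follows immediately. I assume $d\ge 1$; if $d=0$ the class is a single concept and Hoeffding's inequality suffices.

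I would set $m = C\, d \log(1/(\eps\delta))/\eps^2$ for a sufficiently large absolute constant $C$. Taking logarithms, it suffices to show
\[
\log 8 + d\log\frac{em}{d} + \log\frac1\delta \le \frac{m\eps^2}{32}.
\]
I split the right-hand side into two halves, $\frac{m\eps^2}{64} + \frac{m\eps^2}{64}$, and absorb one left-hand term into each.

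For the first half, $\frac{m\eps^2}{64} = \frac{C}{64}\, d\log\frac{1}{\eps\delta} \ge \frac{C}{64}\log\frac1\delta$. Since $\delta<1/3$ gives $\log(1/\delta) > 1$, this dominates $\log 8 + \log(1/\delta)$ once $C$ is large.

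For the second half, I need $d\log\frac{em}{d} \le \frac{C}{64}\, d\log\frac{1}{\eps\delta}$, i.e. $\log\bigl(eC\log(1/(\eps\delta))/\eps^2\bigr) \le \frac{C}{64}\log\frac{1}{\eps\delta}$. Writing $L=\log(1/(\eps\delta))\ge \log 9$, the left side is at most $\log(eC) + 2L + \log L$, which is $O(L)+\log C$. It is therefore dominated by $CL/64$ for large $C$.

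The only mildly delicate point is this comparison of $d\log(em/d)$ against $m\eps^2$. The $\log m$ term contains $\log C$ and a $\log\log$ term, and one must check that the constant $C$ can be chosen uniformly; since $L$ is bounded below by a constant, this is routine. Finally, the statement allows any size $\Omega(\cdot)$, i.e. any $m$ at least this threshold. The bound is monotone decreasing in $m$ there, because $m\eps^2/32 - d\log(em/d)$ is increasing for $m \ge 32d/\eps^2$, so larger samples also work.
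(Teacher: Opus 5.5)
Your proposal is correct and matches the paper. The paper states this corollary without proof as an immediate consequence of the VC inequality, and your choice $m = C\,d\log(1/(\eps\delta))/\eps^2$, the split of $m\eps^2/32$ into two halves, and the monotonicity check for larger $m$ are exactly the routine verification it leaves implicit. One minor nit concerns your $d=0$ aside: there the stated sample size degenerates to $\Omega(0)$, so Hoeffding cannot actually be invoked, and the statement should be read with $d$ replaced by $\max(d,1)$, which the factor $(em/d)^d$ in the VC inequality already presumes.
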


\section{Halfspaces and Forster Transform}
Recall that we call a halfspace homogeneous if is of the form $\sgn(w \cdot x)$, i.e. it goes through the origin. A crucial notion will be 

\begin{definition}[Margin]
Given a halfspace $f(x) = \sgn(w \cdot x)$ and a point $x$, we say that the margin of $x$ with respect to $f$ is 
    \[\frac{|w \cdot x|}{\|w\|_2 \|x\|_2}\]
\end{definition}

Another notion we'll need is

\begin{definition}
\label[definition]{def:margin}
	We say that a set of points $S \subseteq \mathbb{S}^{n-1}$ is in $\eps$-approximate radial isotropic position if
		\[\frac{1-\eps}{n} I_n \preceq \frac{1}{|S|} \sum_{x \in S} xx^T \preceq \frac{1+\eps}{n} I_n   \]
\end{definition}

Notably, such a condition is useful as it implies an anti-concentration bound for any halfspace. In particular,

\begin{lemma}
\label[lemma]{lem:forster-margin}
Suppose that $S \subseteq \mathbb{S}^{n-1}$ is a set of points in $1/2$-approximate radial isotropic position and $w \in \mathbb{S}^{n-1}$, then 
	\[\Pr_{\bx \sim S} \left [|w \cdot \bx| \geq \frac{1}{2 \sqrt{n}} \right] \geq \frac{1}{4n}  \]	
\end{lemma}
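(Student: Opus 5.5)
The plan is to combine the second-moment bound coming from approximate radial isotropy with the trivial upper bound $|w\cdot x|\le 1$, via a Paley--Zygmund style averaging argument. Let $\bZ = (w\cdot \bx)^2$ for $\bx$ drawn uniformly from $S$. Since $S\subseteq \mathbb{S}^{n-1}$ and $\|w\|_2=1$, Cauchy--Schwarz gives $0\le \bZ\le 1$ almost surely. This is the only upper bound needed; no higher moments are required, because $\bZ$ is bounded by $1$.

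First I would compute the mean of $\bZ$:
\[\E[\bZ] = w^T\Big(\frac{1}{|S|}\sum_{x\in S} xx^T\Big) w.\]
By the lower bound in $1/2$-approximate radial isotropic position (\Cref{def:margin} with $\eps=1/2$), this is at least $\frac{1-1/2}{n}\|w\|_2^2 = \frac{1}{2n}$.

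Next I would split the expectation according to the event $E=\{|w\cdot \bx|\ge \frac{1}{2\sqrt n}\}$, equivalently $\{\bZ\ge \frac{1}{4n}\}$. On $E^c$ we have $\bZ<\frac{1}{4n}$, and on $E$ we have $\bZ\le 1$. This gives
\[\frac{1}{2n}\le \E[\bZ]\le \Pr[E]\cdot 1 + (1-\Pr[E])\cdot\frac{1}{4n}\le \Pr[E]+\frac{1}{4n}.\]
Rearranging yields $\Pr[E]\ge \frac{1}{2n}-\frac{1}{4n}=\frac{1}{4n}$, which is exactly the claim.

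This is essentially the reverse Markov inequality from the appendix, applied to the bounded variable $\bZ$ with $B=1$. Applying that lemma directly would give $\Pr[\bZ\ge \tfrac12\E\bZ]\ge \tfrac12\E\bZ\ge \frac{1}{4n}$, and since $\tfrac12\E\bZ\ge \frac{1}{4n}$, the event $\bZ\ge \tfrac12\E\bZ$ implies $E$. The only step needing any care is getting the constants to line up, namely that the threshold $\frac{1}{2\sqrt n}$ squared equals half of the isotropy lower bound $\frac{1}{2n}$. I expect no real obstacle here.
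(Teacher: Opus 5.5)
Your proposal is correct and is essentially the paper's proof. The paper also lower-bounds $\E[(w\cdot \bx)^2]$ by $\frac{1}{2n}$ using isotropy, bounds $(w\cdot\bx)^2\le 1$ by Cauchy--Schwarz, and then invokes the appendix's reverse Markov inequality. Your direct split over the event $\{(w\cdot\bx)^2\ge \frac{1}{4n}\}$ is just that inequality unrolled, with the constants matching as you note.
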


\begin{proof}
Note that
	\[\frac{1}{2n} \leq  w^T \E_{\bx \sim S}[\bx\bx^\top] w = \E_{\bx \sim S}[(w \cdot \bx)^2]\]
Since $(w \cdot \bx)^2 \leq \|w\|_2^2 \|\bx\|_2^2 = 1$ by Cauchy Schwartz, the result follows from the Reverse Markov inequality applied to $(w \cdot \bx)^2$.
\end{proof}

Crucially, there exists efficient algorithms to place points into $\eps$-radial isotropic position.

\begin{theorem}[\cite{diakonikolas2023strongly}]
\label[theorem]{thm:forster}
There exists a randomized algorithm that given a set $S \subseteq \R^n$ of size $m$ and an $\eps \in (0,1)$, runs in time strongly polynomial in $mn/\eps$ and has the following high probability guarantee: either the algorithm finds a matrix $A$ such that $\{Ax/\|Ax\|: x \in S\}$ is in $\eps$-approximate radial isotropic position, or it finds a proper subspace $W \subsetneq \R^n$ such that $|S \cap W| > (\dim(W)/n) \cdot |S|$.
\end{theorem}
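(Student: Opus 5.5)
This is the algorithmic Forster theorem of \cite{diakonikolas2023strongly}. My plan is to derive it from Barthe's convex formulation of radial isotropic position, together with a procedure that either finds an approximate minimizer or certifies a heavy subspace. Write $S=\{x_1,\dots,x_m\}$ and assume no $x_j=0$; a zero point already gives a trivial witness subspace. Consider the convex potential
\[
\Phi(t)=\log\det\Big(\sum_{j=1}^m e^{t_j}\,\frac{x_jx_j^\top}{\|x_j\|^2}\Big)-\frac{n}{m}\sum_{j=1}^m t_j ,\qquad t\in\R^m .
\]
Differentiating gives $\partial_{t_j}\Phi=e^{t_j}\hat x_j^\top M_t^{-1}\hat x_j-\frac nm$, where $\hat x_j=x_j/\|x_j\|$ and $M_t=\sum_j e^{t_j}\hat x_j\hat x_j^\top$. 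Set $A=M_t^{-1/2}$. Then the points $A\hat x_j/\|A\hat x_j\|$ have second moment matrix equal to $\frac1n I_n$ exactly when the gradient vanishes. More usefully, the matrix is within $\frac{1\pm\eps}{n}I_n$ when the gradient is $O(\eps/m)$-small in the appropriate norm. So the first step is a quantitative lemma: an $\eps$-approximate stationary point of $\Phi$, equivalently an approximate Barthe-scaling fixed point, yields the matrix $A$ required by the theorem.

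The second step is the dichotomy, which I would take from Barthe's theorem and the structure of the matroid basis polytope. $\Phi$ is bounded below iff $\frac nm\mathbf 1$ lies in the basis polytope of the linear matroid of $S$. If $\Phi$ is unbounded below, then some subspace $W$ has $|S\cap W|>(\dim W/n)\,|S|$. Algorithmically, I would run an alternating/iterative scaling procedure of the form
\[
t_j\leftarrow t_j-\log\Big(\tfrac mn\, e^{t_j}\hat x_j^\top M_t^{-1}\hat x_j\Big).
\]
Each step decreases $\Phi$ by an amount controlled by the squared distance of the current second moment matrix from $\frac1nI$. As long as the iterate is not yet $\eps$-approximately isotropic, $\Phi$ drops by $\Omega(\eps^2)$. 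If $\Phi$ decreases below an explicit lower bound valid in the bounded-below case, the iterates must have degenerated. From the spectrum of $M_t$ one then reads off a direction along which mass concentrates: take $W$ to be the span of the points whose scaled weights dominate, that is, a top eigenspace of $M_t$. I would then check whether this $W$ violates the dimension count, using a counting argument over the eigenvalue gaps (at most $n$ thresholds need to be tried).

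The main obstacle is making this run in time \emph{strongly} polynomial in $mn/\eps$. The naive lower bound on $\Phi$ in the feasible case depends on the bit complexity of $S$, through determinants of subsets of points, so the iteration count above would only be weakly polynomial. To remove this, I would first reduce to a well-conditioned instance. Recursively, whenever the current scaling exposes a nearly-heavy subspace, either it is a genuine witness, which we output, or we split the problem into the part inside $W$ and the quotient $\R^n/W$ and solve each. The lower bound then only needs to hold for instances with no approximately tight subspace, where it can be bounded by $\poly(m,n,1/\eps)$ independently of bit length. Controlling how the errors compose across at most $n$ levels of this recursion, so that the final matrix is still $\eps$-approximately radially isotropic, is the delicate part. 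There I would lean on randomization, for instance a random initial linear map, to avoid degenerate near-ties among the eigenvalue thresholds; this is the source of the high-probability guarantee in the statement.
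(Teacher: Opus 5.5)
The paper gives no proof of this statement. It imports it as a black box from \cite{diakonikolas2023strongly}, so your sketch has to stand on its own.

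\textbf{What is sound.} $\Phi$ is Barthe's convex potential and your gradient is correct. If every $e^{t_j}\hat x_j^\top M_t^{-1}\hat x_j$ is within a $(1\pm\eps)$ factor of $n/m$, then $A=M_t^{-1/2}$ is an $O(\eps)$-approximate Forster transform. Your scaling step does decrease $\Phi$ by $\Omega(\eps^2)$ while the configuration is not $\eps$-isotropic.

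\textbf{Where it fails.} All of this is only the classical weakly polynomial algorithm; the whole content of the theorem is strong polynomiality. There your argument rests on the claim that, once no approximately tight subspace remains, $\Phi \ge \Phi(0)-\mathrm{poly}(m,n,1/\eps)$. That claim is false. Take $n=2$, with $0.9m$ points at distinct angles in $(0,2^{-L})$ and $0.1m$ points at distinct angles in $[\pi/4,3\pi/4]$.
\begin{itemize}
\item Every line through the origin contains at most one point, so nothing is remotely tight and a Forster transform exists.
\item Yet $\Phi(0)\le 2\log(m/2)$.
\item Setting $e^{t_j}=\|A\hat x_j\|^{-2}$ with $A=\mathrm{diag}(1,2^L)$ gives $\Phi(t)\le -2\log\det A+\frac{2}{m}\sum_j\log\|A\hat x_j\|^2+2\log(m/2)\le -1.6L\log 2+O(\log m)$.
\item Indeed, every $\frac12$-approximate Forster transform has condition number $2^{\Omega(L)}$.
\end{itemize}
So the $\Omega(\eps^2)$-per-step argument can certify nothing better than $O(L/\eps^2)$ iterations, which depends on bit length.

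\textbf{Why the recursion does not repair this.} In this instance $\mathrm{span}(e_1)$ is geometrically badly overfull: $90\%$ of the mass sits where $50\%$ is allowed. So it is exactly the subspace your scaling exposes. It is not a witness. Splitting into ``the part inside $W$'' and $\mathbb{R}^n/W$ is also wrong:
\begin{itemize}
\item $W$ contains at most one point.
\item If the nearby points are assigned to $W$ instead, any block-triangular recombination that keeps them near $W$ leaves $90\%$ of the mass near one line.
\end{itemize}
That split is valid only when $W$ exactly contains an exactly tight set. The bit-length dependence comes precisely from near-containment without containment, where the correct transform depends on the cluster's structure at scale $2^{-L}$.

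\textbf{The missing idea.} You need a mechanism that resolves such clusters in a number of steps independent of $L$. For instance, one might exploit that along a one-parameter rescaling of $W^\perp$ only the $m$ ratios $\|P_{W^\perp}x_j\|/\|P_Wx_j\|$ matter, and then bound combinatorially how many such jumps occur. Supplying such a mechanism is the substance of the cited result, and your sketch does not. It also leaves the error composition across recursion levels open.

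\textbf{Two smaller points.}
\begin{itemize}
\item A random initial linear map $B$ cannot remove these degeneracies. The problem is $GL_n$-equivariant: $BS$ has the corresponding witnesses $BW$ and transforms $AB^{-1}$, so for well-conditioned $B$ the required $2^{\Omega(L)}$ conditioning persists.
\item A top eigenspace of $M_t$ generally contains no point of $S$, so it is not a witness.
\end{itemize}
The witness side is actually easy to do exactly. A set $T\subseteq S$ with $n|T|>m\,\mathrm{rank}(T)$ exists iff $\frac{n}{m}\mathbf{1}$ lies outside the basis polytope of the linear matroid of $S$. Such a $T$ can be found by matroid partition or submodular minimization in strongly polynomial time, and $W=\mathrm{span}(T)$ is automatically proper. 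The genuinely hard case is the feasible one, and that is the case your proposal leaves unresolved.
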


\begin{corollary}
\label[corollary]{cor:forster}
	Given a set of points $S \subseteq \R^n$ and parameter $\delta \in (0,1)$, there exists an algorithm $\forster(S,\delta)$ that outputs a subspace $V$ and matrix $A: V \rightarrow V$ such that $\{A P_V x/\|A P_V x\|: x \in V\}$ is in $\eps$-approximate radial isotropic position in $\mathbb{R}^{\dim(V)}$, where $P_V: \mathbb{R}^n \rightarrow V$ denotes the projection matrix onto $V$. Moreover, the algorithm runs in time strongly polynomial in $mn \log(1/\delta)$ and $|S \cap V| > (\dim(V)/n) \cdot |S|$.
\end{corollary}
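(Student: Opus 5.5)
We obtain \Cref{cor:forster} by applying \Cref{thm:forster} recursively, each time restricting to the subspace it returns. Each recursive call strictly lowers the dimension, so there are at most $n$ rounds. We take the approximation parameter to be the constant $1/2$ used later in \Cref{lem:forster-margin}. We assume $0\notin S$, for instance by discarding zero points beforehand, since they have no normalization.

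\textbf{Recursion.} Set $W_0=\R^n$ and $S_0=S$. In round $j$, identify $W_j$ with $\R^{d_j}$, where $d_j=\dim(W_j)$, by fixing an orthonormal basis. Then run the algorithm of \Cref{thm:forster} on the points $S_j=S\cap W_j$, written in these coordinates.
\begin{itemize}
\item If it returns a matrix $A$ such that $\{Ax/\|Ax\|:x\in S_j\}$ is in $1/2$-approximate radial isotropic position in $\R^{d_j}$, we output $V=W_j$ and $A$. For $x\in S\cap V$ we have $P_Vx=x$, so the normalized images are exactly this set.
\item Otherwise it returns a proper subspace $W_{j+1}\subsetneq W_j$ with $|S_j\cap W_{j+1}|>(d_{j+1}/d_j)\,|S_j|$, and we recurse on it.
\end{itemize}

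\textbf{Termination and the counting bound.} Since $d_{j+1}<d_j$, the recursion stops after at most $n$ rounds. It must stop by the time $d_j=1$: all points of $S_j$ are then nonzero multiples of a single unit vector $u$, so they normalize to $\pm u$, whose second moment is $1=1/d_j$. The identity matrix therefore already puts them in exact radial isotropic position.

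The counting guarantee telescopes. By induction,
\[
|S\cap W_{j}| \;>\; \frac{d_j}{d_{j-1}}\cdot\frac{d_{j-1}}{n}\,|S| \;=\; \frac{d_j}{n}\,|S| .
\]
If the recursion stops at $j=0$, then $V=\R^n$ and $|S\cap V|=|S|$. Here only equality holds, so the corollary should be read with $\ge$, or with strict inequality whenever $V\neq\R^n$. This non-strict version is all that \Cref{lem:train-labeled} uses.

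\textbf{Failure probability and running time.} \Cref{thm:forster} succeeds only with high (constant) probability, so we boost each call to failure probability $\delta/n$. Both possible outputs can be checked deterministically in polynomial time.
\begin{itemize}
\item For a matrix $A$, compute the empirical second-moment matrix of the normalized points and check that its eigenvalues lie in $[\tfrac{1}{2d_j},\tfrac{3}{2d_j}]$.
\item For a subspace $W$, count $|S_j\cap W|$ by exact membership tests.
\end{itemize}
Repeating each call $O(\log(n/\delta))$ times until a verified output appears gives failure probability $\delta/n$ per round. A union bound over the at most $n$ rounds then gives overall success probability $1-\delta$.

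Each round costs strongly polynomial time in $mn$, plus the cost of a change of basis. So the total running time is strongly polynomial in $mn\log(1/\delta)$.

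The main point needing care is this verification step. It is what turns the informal "high probability" of \Cref{thm:forster} into an explicit $\log(1/\delta)$ dependence. The other subtle points are the degenerate zero vector and the equality case at $V=\R^n$.
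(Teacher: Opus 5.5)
Your proposal is correct and follows the paper's proof. Both recursively apply \Cref{thm:forster} with $\eps = 1/2$, boost each call to failure probability $\delta/n$ by checking its output, take a union bound over at most $n$ dimension-decreasing rounds, and telescope the counting bound. Your additional remarks are valid refinements that do not affect how the corollary is used in \Cref{lem:train-labeled}: discarding zero points, observing that termination is forced at dimension one, and noting that the strict inequality fails when $V=\R^n$ (the $\ell=0$ case the paper's telescoping implicitly skips).
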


\begin{proof}
We now start by noting that we can boost the failure probability of the algorithm in \Cref{thm:forster} to $\delta/n$ by repeating it $\log(n/\delta)$ times and checking whether the points are in radial isotropic condition or the subspace $W$ contains enough points. As such, we'll assume throughout that it has failure probability $\delta/n$.

Our procedure is then the following: Let $V = \R^n$ and $S' = S$. Loop at most $n$ times: Run the algorithm from \Cref{thm:forster} on $S' \subseteq V$ with $\eps = 1/2$ to either get a matrix $A$ or subspace $W \subsetneq V$. If we get a matrix, output $A,V$. Otherwise, update $V \gets W$ and $S' \gets P_W (S' \cap W)$.

We claim that if the first $n$ calls to the algorithm in \Cref{thm:forster} all succeed, then the above algorithm satisfies the corollary. Indeed, the process will eventually return a matrix since the dimension of $V$ decreases in each iteration. Thus, we indeed have that $\{A P_V x/ \|A P_V x\|: x \in S \cap V\}$ is in $1/2$-radial isotropic position. Moreover, if $ \R^n := W_0 \supseteq W_1 \supseteq W_2 \dots \supseteq W_\ell$ are the subspaces in each iteration of the loop, then
	\[|W_\ell \cap S| > \frac{\dim(W_\ell)}{\dim(W_{\ell-1})} |W_{\ell-1} \cap S| \geq |S| \prod_{i=1}^{\ell} \frac{\dim(W_i)}{\dim(W_{i-1})} = \frac{\dim(W_\ell)}{n} \cdot |S|.\]
Finally, the statement on the runtime follows from the polynomial runtime of \Cref{thm:forster}.
\end{proof}

\begin{remark}
    Note that applying a Forster transform preserves the property of a point being labeled by a homogeneous halfspace. In particular, let $S \subseteq \R^n$ and suppose that we are given labeled pairs $(x,y)$, where $x \in S$ and $y = \sgn(w \cdot x)$ for some $w \in \R^n$. We then observe that for $x \in V$ we have that the labeled point set $\left( \frac{A P_V x}{\|A P_V x\|}, y\right)$, where $y = \sign(w \cdot x)$, are labeled by a halfspace in the transformed space, namely $\sgn \left ( (A^{-1})^T P_V w \cdot \frac{A_i P_V x}{\|A_i P_V x\|} \right)$.
\end{remark}

\section{The Agnostic Setting}\label{section:agnostic}

In our main theorem (\Cref{theorem:pq-via-tds-main}), we have shown that PQ learning can be reduced to TDS learning. However, the premise of our result requires that the training and test distributions are both realizable by the same function $f\in\calC$. We will now focus on the agnostic setting where neither the training not the test labels need be realizable.

\begin{definition}[Agnostic Setting \citep{ben2006analysis,blitzer2007learning}]\label[definition]{definition:agnostic}
    Let $\calC$ be a concept class over $\calX$. In the agnostic setting, the learner has labeled sample access to a training distribution $\dtrainlabeled$, and unlabeled sample access to the $\calX$-marginal of a test distribution $\dtestlabeled$, and the goal is to produce a hypothesis whose  on the test distribution competes with the following benchmark:
    \[
        \lambda = \lambda(\dtrainlabeled,\dtestlabeled;\calC) := \min_{f\in\calC} \Bigr(\Pr_{(\bx,y)\sim \dtrainlabeled}\bigr[ f(\bx)\neq y\bigr] + \Pr_{(\bx,y)\sim \dtestlabeled}\bigr[ f(\bx)\neq y\bigr] \Bigr)
    \]
\end{definition}

The parameter $\lambda$ quantifies the relationship between the training and test labels and, in the absence of test labels, a dependence of the output's test error on $\lambda$ is unavoidable \citep{klivans2024testable}. We may now define the agnostic versions of TDS and PQ learning.

\begin{definition}[Agnostic TDS Learning \citep{klivans2024testable}]
\label[definition]{definition:agnostic-tds-learning}
    Let $\calC$ be a concept class over $\calX$. The algorithm $\calA$ is a TDS learner for $\calC$ up to error $O(\lambda)+\eps$ and probability of failure $\delta$ in the agnostic setting if it either outputs $\rej$ or outputs $\acc$ and returns a hypothesis $h:\calX\to \{\pm 1\}$ such that, with probability at least $1-\delta$, the following hold:
    \begin{enumerate}[label=\textnormal{(}\alph*\textnormal{)},leftmargin=6ex]
    \item\label{item:soundness-tds-agnostic} (\textit{soundness}) Upon acceptance, the test error is $\Pr_{(\bx,\by)\sim \dtestlabeled}[\by \neq h(\bx)] \le A\cdot\lambda+ \accuracy$, where $A\ge 1$ is some constant.
    \item\label{item:completeness-tds-agnostic} (\textit{completeness}) 
    If $\dtrain =\dtest$, then the algorithm accepts.
    \end{enumerate}
\end{definition}

\begin{definition}[Agnostic PQ Learning \citep{goldwasser2020beyond}]
\label[definition]{definition:agnostic-pq-learning}
    Let $\calC$ be a concept class over $\calX$. The algorithm $\calA$ is a PQ learner for $\calC$ up to error $O(\lambda)+\eps$, rejection rate $\rejrate$ and probability of failure $\delta$ in the agnostic setting if it outputs, with probability at least $1-\delta$, a hypothesis $h:\calX\to \{\pm 1\}$ and a selector $g:\calX\to \{0,1\}$ such that:
    \begin{enumerate}[label=\textnormal{(}\alph*\textnormal{)},leftmargin=6ex]
    \item\label{item:accuracy-pq-agnostic} (\textit{accuracy}) The test error is bounded as $\Pr_{\bx\sim \dtest}[f(\bx) \neq h(\bx)\text{ and }g(\bx) = 1] \le A\cdot \lambda+\accuracy$, where $A\ge 1$ is some constant.
    \item\label{item:rejrate-pq-agnostic} (\textit{rejection rate}) The probability of rejection is bounded as $\Pr_{\bx\sim \dtrain}[g(\bx) = 0] \le \rejrate$.
    \end{enumerate}
\end{definition}

The following theorem states that efficient agnostic TDS learning up to error $O(\lambda)$ implies efficient agnostic PQ learning with error rate $O(\lambda/\eta)$ and rejection rate $O(\eta)$, where $\eta$ can be chosen freely. Note that no PQ learner can attain error better than $\Omega(\lambda/\eta)$ and rejection rate at most $\eta$ even if given an unbounded number of samples and runtime. (This follows from a straightforward modification of Lemma D.1 in \cite{goldwasser2020beyond}).

\begin{theorem}[PQ learning via TDS learning: The Agnostic Case]\label{theorem:pq-via-tds-agnostic}
    Let $\calA$ be an agnostic TDS learner for some class $\calC$ that achieves error $O(\lambda)+\eps$, runs in time $T_\calA(\eps)$ and has total sample complexity $m_\calA(\eps)$. Then, there is a PQ learner for $\calC$ that receives the parameters $(\eps,\eta,\delta)$ as inputs and achieves error $O(\frac{\lambda}{\eta} + \eps)$, rejection rate $O(\eta+\eps)$ and probability of failure $\delta$. In particular, the algorithm runs in time $T_\calA \cdot \widetilde{O}_{\eps,\delta}(m_\calA^6) + \poly(m_\calA,\frac{1}{\eps},\frac{1}{\eta},\log \frac{1}{\delta})$ and uses $\poly(m_\calA,1/\eps,1/\eta,\log 1/\delta)$ samples from the training and test distributions.
\end{theorem}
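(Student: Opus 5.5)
We would rerun the branching-program construction of \Cref{algorithm:boosting} almost verbatim, with the agnostic TDS learner $\calA$ in place of the realizable one, and change only the analysis of the leaves where $\calA$ accepts. The reduction to a weak distinguisher (\Cref{lemma:single-sample-advantage}) never uses realizability: it only needs $\calA$ to accept with probability $\ge 0.99$ when $\dtrain=\dtest$ and to reject with probability $\ge 0.02$ on the current pair. So it applies at every node $(i,t)$ to the conditional pair $(\bar\calG^1_{i,t},\calG^0_{i,t})$. Likewise, \Cref{lemma:martingale-boost} only uses the weak distinguishers and is label-agnostic. The reasoning for first-type leaves (rarely visited nodes) and for last-level leaves is also unchanged, so the rejection rate is still $O(\eps)$ plus whatever extra rejection we introduce below.

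The new issue is the third-type leaves $N$, where $\calA$ accepts. There $h_N$ has conditional test error at most $A\lambda_N+\eps$, where $\lambda_N=\lambda(\bar\calG^1_N,\bar\calG^0_N;\calC)$. We bound $\lambda_N$ by the global optimum. Let $f^\star$ attain $\lambda$, and let $e_b(N)=\Pr_{\calD_b}[\text{reach }N,\ f^\star\neq y]$. Then
\[\lambda_N\le \frac{e_1(N)}{p^1_N}+\frac{e_0(N)}{p^0_N}.\]
Weighting the test error by $p^0_N$ and summing over leaves gives the total test error on accepted third-type leaves:
\[\sum_N p^0_N(A\lambda_N+\eps)\le A\sum_N e_0(N)+A\sum_N \frac{p^0_N}{p^1_N}e_1(N)+\eps\le A\lambda+A\max_N\frac{p^0_N}{p^1_N}\cdot\lambda+\eps.\]
So the only dangerous leaves are those with a large density ratio $p^0_N/p^1_N$.

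This is where $\eta$ enters. We modify the labelling rule: a node that $\calA$ accepts gets $\lbl=1$ only if its estimated ratio satisfies $\hat p^0_N/\hat p^1_N\le 1/\eta$. Otherwise we set $\lbl=0$ and abstain. Every abstaining leaf of this kind has $p^1_N\le O(\eta)\,p^0_N$, so summing over such leaves adds at most $O(\eta)\sum_N p^0_N\le O(\eta)$ to the rejection rate. Every predicting leaf has ratio $O(1/\eta)$, so the bound above becomes $O(\lambda/\eta)+O(\eps)$. We also increase $\pmin$ mildly, to $\Theta(\eps\eta/T^2)$ with $\eta$ folded into the estimation accuracy, so that the ratio estimates are reliable relative to the $\gamma_1$ errors. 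The node-wise estimates then still cost only $\poly(m_\calA,1/\eps,1/\eta,\log 1/\delta)$ samples.

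The main obstacle is the amplification step at accepting nodes. In the realizable proof the $r$ accepting runs are combined by a majority vote, and Proposition C.1 of \cite{klivans2024testable} controls its error against the single target $f$. In the agnostic case each run only guarantees error $A\lambda_N+\eps$ against noisy labels, and a plain majority can lose a constant factor. I would handle this by choosing the final hypothesis by validation instead of majority. The accepting runs each have an error bound relative to the unknown test labels, which we cannot see, so we would instead use pairwise disagreement on unlabeled test samples from $\calG^0_N$: select the hypothesis whose maximum disagreement with the others is smallest. By the triangle inequality, this selection keeps the error at $O(A\lambda_N+\eps)$ (a tournament argument). Each run costs $T_\calA$, and comparing all pairs of runs across all nodes accounts for the $T_\calA\cdot\widetilde O(m_\calA^6)$ term in the running time. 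Finally, a union bound over all estimates and calls gives failure probability $\delta$.
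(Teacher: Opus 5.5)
Your overall architecture is the paper's. The weak-distinguisher and martingale steps never use realizability. A density-ratio test turns nodes with $\hat p^1_N<\eta\,\hat p^0_N$ into abstaining leaves, which cost $O(\eta)$ in rejection, while every predicting accepting leaf has $p^0_N=O(p^1_N/\eta)$. The paper runs this test at every node that passes the $\pmin$ check, before calling $\calA$, and absorbs estimation error additively via $\gamma_1=\eps\eta/(4T(T+1))$. You run it only at accepting nodes and argue multiplicatively with $\gamma_1\le\pmin/2$. Both work, and your decomposition through one global minimizer $f^\star$ is a clean form of the paper's calculation.

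The step that fails is the one you call the main obstacle, and it is not actually an obstacle. Majority vote over the accepting runs loses only a constant factor in the agnostic case as well. Suppose, as holds with high probability, that at most a $0.1$ fraction of the $k$ accepting hypotheses are bad. Wherever the majority outputs $b$, at least $0.4k$ good hypotheses also output $b$, and each incurs conditional error exactly $\Pr[\by\neq b\mid\bx=x]$. Averaging over $x$ bounds the majority's error by $2.5(A\lambda_N+\eps)$, which $O(\lambda/\eta+\eps)$ absorbs.

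Your replacement, by contrast, breaks. The point of the $r$ repetitions is that each run may, with probability up to $0.01$, accept with an arbitrary hypothesis. So bad accepting runs are expected among $r=\Theta(\log\frac{T}{\eps\delta})$ runs, and a min--max disagreement rule is governed by exactly these outliers. For example, let a bad $b_2$ disagree with the good hypotheses on a set of mass $D$, and let a bad $b_1$ disagree with them on half of that set. Then $b_1$ has maximum disagreement about $D/2$, while each good hypothesis has maximum disagreement about $D$. So $b_1$ is selected, with error about $D/2$ regardless of $\lambda_N$. If instead every accepting run were good, no selection would be needed at all.

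Keep the majority vote, or use a robust rule such as smallest median disagreement. Once more than half the runs are good, that rule gives error at most $3(A\lambda_N+\eps)$ plus estimation error. Separately, the pairwise comparisons make no calls to $\calA$, so they are not what produces the $T_\calA\cdot\widetilde O(m_\calA^6)$ term.
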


\begin{proof}
    The algorithm for the agnostic case is very similar to \Cref{algorithm:boosting}, with one modification. In particular, after computing the estimates $\hat p_{i,t}^b$ up to error $\gamma_1 = \frac{\eps\eta}{4 T (T+1)}$ and ensuring that they are both at least equal to $\pmin$, the algorithm checks whether $\hat p_{i,t}^1 \le \eta\cdot \hat p_{i,t}^0$, in which case the node $(i,t)$ becomes a leaf node with $\lbl(i,t) = 0$.

    Recall that in the realizable case we had three types of leaf nodes of the branching program $P$ constructed by \Cref{algorithm:boosting} (see \Cref{item:first-type-leaves,item:second-type-leaves,item:third-type-leaves}). The modified algorithm described above creates one additional type of leaf nodes $\calT_4\subseteq\leaves(P)$ for which we have ensured that $\hat p_{N}^1 \le \eta\cdot \hat p_{N}^0$. For the following, suppose that we are in the event that holds with probability at least $1-O(\delta)$ where the estimates $\hat p_{i,t}^b$ are $\gamma_1$-close to their expected values $p_{i,t}^b$.
    
    Observe that the distribution $\Dtrain$ is a mixture distribution with components $\{\calG_{N}^1\}_{N\in\leaves(P)}$. Each component $\calG_{N}^1$ corresponds to weight $p_{N}^1$. Similarly, $\Dtest$ is a mixture of $\{\calG_{N}^0\}_{N\in\leaves(P)}$ with weights $\{p_N^0\}_{N\in\leaves(P)}$.

    \paragraph{Rejection Rate} For the rejection rate, the only difference between the agnostic case and the proof of \Cref{theorem:pq-via-tds-main} is the existence of nodes $N$ of the new type $\calT_4$, where we have $\hat p_{N}^1 \le \eta\cdot \hat p_{N}^0$. The additional term of the rejection rate that corresponds to these nodes is bounded by the following quantity:
    \[
        \sum_{N\in \calT_4} p_{N}^1 \le O(\gamma_1 T^2) + \sum_{N\in \calT_4} \hat p_{N}^1 \le O(\gamma_1 T^2) + \eta\cdot \sum_{N\in \calT_4}\hat p_{N}^0 \le O(\gamma_1 T^2) + \eta\cdot \sum_{N\in\leaves(P)} p_N^0 \le O(\eta)
    \]
    Therefore, the overall rejection rate is $O(\eta+\eps)$.

    \paragraph{Error Rate} For the error rate on the test distribution, we focus on the nodes of type $\calT_3$ (see \Cref{item:third-type-leaves}), i.e., the nodes $N$ where the TDS learner has produced hypotheses $h_N$ for which:
    \[
        \Pr_{(\bx,\by)\sim\bar\calG_N^0, P}\bigr[h_N(\bx)\neq \by \bigr] \le A\cdot \underbrace{\min_{f\in \calC}\Bigr( \Pr_{(\bx,\by)\sim \bar\calG_N^1}\bigr[ f(\bx)\neq \by\bigr] + \Pr_{(\bx,\by)\sim \bar\calG_N^0}\bigr[ f(\bx)\neq \by\bigr] \Bigr)}_{\lambda_{N}} \,+\, \eps
    \]
    The total error incurred by these nodes is bounded as follows by using the fact that $\calT_3\cap\calT_4 = \emptyset$ and therefore for any $N\in\calT_3$ we have $\hat p_N^0<\frac{\hat p_N^1}{\eta}$.
    \begin{align*}
        &\sum_{N\in \calT_3} p_N^0\cdot \Pr_{(\bx,\by)\sim\bar\calG_N^0, P}\bigr[h_N(\bx)\neq \by \bigr] \le O(\eps)+\sum_{N\in \calT_3} \hat p_N^0\cdot \Pr_{(\bx,\by)\sim\bar\calG_N^0, P}\bigr[h_N(\bx)\neq \by \bigr] \\
        &\le A\cdot {\min_{f\in \calC} \sum_{N\in\calT_3}\biggr( \hat p_N^0\cdot \Pr_{(\bx,\by)\sim \bar\calG_N^1}\bigr[ f(\bx)\neq \by\bigr] + \frac{\hat p_N^1}{\eta}\cdot \Pr_{(\bx,\by)\sim \bar\calG_N^0}\bigr[ f(\bx)\neq \by\bigr] \biggr)} \,+\, O(\eps) \\
        &\le \frac{A}{\eta}\cdot {\min_{f\in \calC} \sum_{N\in\calT_3}\biggr(  p_N^0\cdot \Pr_{(\bx,\by)\sim \bar\calG_N^1}\bigr[ f(\bx)\neq \by\bigr] + { p_N^1}\cdot \Pr_{(\bx,\by)\sim \bar\calG_N^0}\bigr[ f(\bx)\neq \by\bigr] \biggr)} \,+\, O(\eps) \\
        &\le \frac{A}{\eta}\cdot {\min_{f\in \calC} \sum_{N\in\leaves(P)}\biggr(  p_N^0\cdot \Pr_{(\bx,\by)\sim \bar\calG_N^1}\bigr[ f(\bx)\neq \by\bigr] + { p_N^1}\cdot \Pr_{(\bx,\by)\sim \bar\calG_N^0}\bigr[ f(\bx)\neq \by\bigr] \biggr)} \,+\, O(\eps) \\
        &\le \frac{A\cdot \lambda}{\eta} + O(\eps)\,.
    \end{align*}
    The error incurred by the leaf nodes that lie outside $\calT_3$ remains unchanged, which concludes the proof of \Cref{theorem:pq-via-tds-agnostic}.
\end{proof}

\begin{remark}
\cite{goldwasser2020beyond} showed that access to an agnostic ERM oracle in the standard (i.e., no distribution shift) setting suffices to construct an agnostic PQ learner using only a polynomial number of oracle calls. However, our result is not implied by theirs for the following reasons:
\begin{enumerate}
    \item Upon acceptance, our TDS learner is allowed to output a constant-factor approximate hypothesis, whereas the reduction of \cite{goldwasser2020beyond} requires a hypothesis achieving optimal error.
    \item Their approach further requires the learning algorithm to be proper (i.e., an ERM), whereas our reduction does not impose such a requirement.
\end{enumerate}
\end{remark}

\end{document}